\def\eps{\varepsilon}
\newtheorem{thm}{Theorem}[section]
\newtheorem{corollary}[thm]{Corollary}
\newtheorem{lemma}[thm]{Lemma}
\newtheorem{assumption}[thm]{Assumption}
\newtheorem{pr}[thm]{Proposition}
\newtheorem{definition}[thm]{Definition}
\newenvironment{defi}{\begin{definition}\rm}{\end{definition}}
\newtheorem{example}[thm]{Example}
\newtheorem{remark}[thm]{Remark}
\newtheorem{tab}{Table}
\newcommand{\EE}{{\bf E}}
\newcommand{\X}{{\underline{\mathbf{X}}}}
\newcommand{\Z}{{\underline{\mathbf{Z}}}}
\title{On the Capacity of Channels with Deletions and States}
\author{Yonglong Li and Vincent Y. F. Tan \thanks{Y.~Li is  with the Department of Electrical and Computer Engineering,
National University of Singapore (NUS), Singapore 117583 (e-mail: elelong@nus.edu.sg).
V.~Y.~F.~Tan is  with the with the Department of Electrical and Computer Engineering and the Department of Mathematics, NUS, Singapore 119076 (e-mail: vtan@nus.edu.sg).}}
\begin{document}
\maketitle
\begin{abstract}
We consider the class of channels formed from the concatenation of a deletion channel and a finite-state channel. For this class of channels, we show that the operationally-defined capacity is equal to the stationary capacity, which can be approached by a sequence of Markov processes with increasing Markovian orders. As a by-product, we show that the polar coding scheme constructed by Tal,~Pfister,~Fazeli, and Vardy [arxiv: 1904.13385 (2019)] achieves the capacity of the deletion channel.
\end{abstract}
\section{Introduction}
In some communication systems, synchronizing errors form one of the major sources of noise. Paradigmatic examples of this class of channels include insertion and deletion channels. In~\cite{dubrushin67}, Dobrushin considered discrete memoryless channels with synchronizing errors and established the channel coding theorem. In~\cite{ahlswede}, Ahlswede and Wolfowitz proved a channel coding theorem together with an accompanying strong converse for discrete memoryless channels with bounded synchronizing errors. 

In the last two decades, computable bounds on the capacity of deletion and insertion channel have been proposed. In~\cite{surveydeletion, duman}, the authors derived bounds on the capacity of deletion channel. In~\cite{kavcic2010, kavcic2015}, the authors proposed trellis-based approach to numerically compute achievable rates for insertion and deletion channels and also the concatenation of an insertion or deletion channel and a Gaussian inter-symbol interference channel. In~\cite{sudan, montanari}, two group of researchers independently derived the asymptotic behavior of the capacity of the binary deletion channel in the high signal-to-noise regime.

 \begin{figure}
  \begin{center}
\begin{tikzpicture}
\draw (0,0) rectangle (3,1.5);
\draw (5,0) rectangle (8,1.5);
\draw[->]  (3,0.75)--(4,0.75);
\draw[->]  (-1.5,0.75)--(0,0.75);
\draw[-]  (8,0.75)--(9.4,0.75);
\draw  (4,0.75)--(5,0.75);
\node[below] at (1.5,1.1) {\small Deletion Channel};
\node[below] at (6.5,1.1) {FSC};
\node[below] at (-1.0,1.3) {$x_1^n$};
\node[below] at (4.0,1.3) {$Y(x_1^n)$};
\node[below] at (8.7,1.3) {$Z(x_1^n)$};
\end{tikzpicture}
\end{center}
 \caption{Concatenation of a deletion channel and an FSC}
\end{figure}
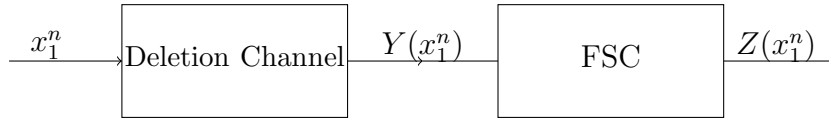
In this work, we consider the class of channels formed from the concatenation of a deletion channel and a finite-state channel (FSC) as shown in Figure $1$. We study the properties of its {\em operational capacity}, which is the largest rate below which reliable communication is possible. One practical reason to consider such a communication system is that   there may be some synchronization errors~\cite{hu2007}  introduced during the writing  process of certain bit-patterned magnetic recording (BPMR) systems. In addition communication channels---for example, the Gaussian inter-symbol interference channels---used to model BPMR systems often possess memory. This motivates us to study the concatenation of a deletion channel and an FSC.

This channel model belongs to the class of channels with deletion and states, whose capacity, in general,  does not admit a closed-form formula. Computing the capacity of channels with deletions or states is a long-standing open problem in information theory. One well-known approach to numerically computing the channel capacity is the so-called {\em Markov approximation} scheme, which has been extensively exploited in past decades for computing the capacity of families of FSCs (see~\cite{arnoldsimulationbounds, vontobel,randomapproachhan} and references therein). The Markov approximation scheme is an approach to compute the lower bounds of the channel capacity by numerically optimizing over the Markovian input processes of order $m$ to obtain the so-called $m^{\mathrm{th}}$-order {\em Markov capacity}. Unlike general input processes, the Markov structure of the input process usually ensures that the computation or approximation of the capacity becomes rather efficient. The effectiveness of this approach has been justified for different channel models in~\cite{chensiegel, lihanflash}, where the authors showed that as the Markov order $m$ tends to infinity, the sequence of the Markov capacities increases to the operationally-defined capacity of the corresponding channel. It is certainly plausible that the Markov approximation scheme can be applied to other channels with memory; as a matter of fact, the main result of the present paper is to confirm this for our channel model.

In the last decade, much progress has been made in computing the Markov capacity of FSCs; in particular, a generalized Blahut-Arimoto algorithm and a randomized algorithm have been respectively proposed in~\cite{vontobel} and~\cite{randomapproachhan}. Although the convergence of the algorithms to the Markov capacity for our channel model is yet unclear, we will justify the effectiveness of the Markov approximation scheme for our channel model. In particular, we show that the sequence of Markov capacities approaches the operationally-defined capacity (which is also the stationary capacity as well as the Shannon capacity) as the Markovian order increases.

Yet another motivation for our work stem from the recent developments in polarization theory for channels with memory. In particular, recently, Shuval and~Tal~\cite{idoFSC}, and Tal,~Pfister,~Fazeli, and Vardy~\cite{idodeletion}, respectively, showed that the strong polarization phenomenon holds for both indecomposable FSCs and deletion channels with regular hidden-Markov inputs.\footnote{This class of inputs will be defined formally in Section~\ref{main}.} Using this fact, they constructed polar coding-based schemes whose rates approach the mutual information between the input and output as the number of polarization levels goes to infinity. However, the fact that regular hidden Markov inputs can approach the capacity was not shown in~\cite{idoFSC,idodeletion}. Hence the authors could not unequivocally conclude that their polar coding scheme  approaches the capacity of the channel. In this paper, we answer---in the  affirmative---the question on whether regular hidden Markov inputs can approach the capacity of a channel formed from the concatenation of a deletion channel and an FSC.

%One technical difficulty to prove the main result lies in the proof of the achievability part.  Due to the deletion, one can not apply the generalized Shannon-Bremain-McMillian theorem~\cite{barron} directly, we will define a new indecomposable FSC.  

The rest of this paper is structured as follows. In Section~\ref{model}, we introduce our channel model and provide a precise description of the problem. In Section~\ref{main}, we state the main results. Before we present the detailed proof of our main results in Section~\ref{proof}, we derive several properties of our channel model in~Section~\ref{property}.
%Let $W_1$ be the channel with timing synchronization errors defined by
%$$
%\Pr(y_1^k|x_1^n)=\sum_{\bar{y}_1^n}\prod_{i=1}^{n}\Pr(\bar{y}|x_i)
%$$
%where the summation is taken over all $\bar{y}_1^n$ such that the concatenation of $\bar{y}_1^n$ is exactly $y_1^k$. 

\section{Problem Setting}\label{model}
Let $\mathcal{X\ ,\mathcal{S},\ \mbox{and}\ Z}$ be finite sets with cardinalities $|\mathcal{X}|$, $|\mathcal{S}|$, and $|\mathcal{Z}|$ respectively. For any positive integer $n$, let $\mathcal{X}^{*n}$ be the set of all vectors over $\mathcal{X}$ with lengths no larger than $n$. For any $x_1^n\in \mathcal{X}^{n}$ and $x^{*}\in\mathcal{X}^{*n}$, let $K(x_1^n,x^{*})$ be the number of ways of producing $x^*$ by possibly deleting some symbols in $x_1^n$. Let $\ell(x^*)$ be the length of $x^{*}$. Let ${W}_1$ be a {\em deletion channel} with deletion probability $d$. As such, the probability of obtaining output $x^*$ when passing the input $x_1^n$ through $W_1^{n}$ is given by  
$$
W_1^n(x^*|x_1^n)=(1-d)^{\ell(x^*)}d^{n-\ell(x^*)}K(x_1^n,x^*).
$$ 
Let $W_2$ be an FSC with input alphabet $\mathcal{X}$, output alphabet $\mathcal{Z}$, and state alphabet $\mathcal{S}$.  The probability of a given output sequence $z_1^n$  and a state sequence $s_1^n$ given an input sequence $x_1^n$ and an initial state $s_0$ is defined as
$$
W_2^{n}(z_1^n,s_1^n|x_1^n,s_0)=\prod_{i=1}^{n}p(z_i,s_i|s_{i-1},x_i),
$$
where $p(z,s|s^\prime,x)$ is a conditional probability mass function, that is, given any $(s^\prime,x)\in\mathcal{S}\times\mathcal{X}$, $p(z,s|s^\prime,x)\ge 0$ and $\sum_{z,s}p(z,s|s^\prime,x)=1$.
Thus the probability of obtaining an output sequence $z_1^n$ by passing a sequence $x_1^n$ through the channel $W_2^n$ with an initial state $s_0$ is given by
$$
W_2^{n}(z_1^n|x_1^n,s_0)=\sum_{s_1^n}W_2^{n}(z_1^n,s_1^n|x_1^n,s_0)=\sum_{s_1^n}\prod_{i=1}^{n}p(z_i,s_i|s_{i-1},x_i).$$
In this paper we only consider {\em indecomposable} FSCs as defined in Gallager's book~\cite[pp.~106]{gallagerbook}. Let $p(s_n|x_1^n,s_0)=\sum_{s_1^{n-1},z_1^n}W_2^{n}(z_1^n,s_1^n|x_1^n,s_0)$ be the conditional marginal probability mass function of $S_n$ given an input sequence $x_1^n$ and an initial state $s_0$. An FSC is said to be {\em indecomposable} if for any $\eps>0$, there exists an $N$ such that for all $n\ge N$
\begin{align}
|p(s_n|x_1^n,s_0)-p(s_n|x_1^n,s_0^\prime)|\le \eps
\end{align}
for any $s_0$, $s_0^{\prime}$, $s_n$, and $x_1^n$.

Let $x_1^n$ and $Y(x_1^n)$ be respectively, the input and output of the deletion channel $W_1^n$. Let $W^n$ be the concatenation of $W_1^n$ and $W_2^{\ell(Y(x_1^n))}$, that is, the output $Y(x_1^n)$ of $W_1^n$ is fed to $W_2^{\ell(Y(x_1^n))}$ as its channel input. Formally, $W^n$ is defined by the conditional probability of observing output $z^{*}\in\mathcal{Z}^{*n}$ when passing $x_1^n$ through $W^n$ with a fixed initial state $s_0$; this is
$$
W^n(z^{*}|x_1^n,s_0)=\sum_{x^{*}\in\mathcal{X}^{*n}:\ell(x^{*})=
\ell(z^*)}W_1^n(x^{*n}|x_1^n)W_2^{\ell(z^{*})}(z^{*}|x^{*},s_0).
$$

Throughout the paper we fix the initial state of $W_2$ to be an arbitrary state $s_0\in\mathcal{S}$ and denote the output of $W^n$ corresponding to the input $x_1^n$ by $Z(x_1^n)$. The state of $W^n$ refers to that of $W_2^n$.

\begin{defi}
An {\em $(n,2^{nR},\eps_n)$-code} with rate $R$ for $W$ is defined by
\begin{itemize}
\item Encoder $f$: a map from $\{1,\cdots,2^{nR}\}$ to $\mathcal{X}^n$;
\item Decoder $g$: a map from $\mathcal{Z}_n^{*}$ to $\{1,\cdots, 2^{nR}\}$;
\item Average error probability $\frac{1}{2^{nR}}\sum_{i=1}^{2^{nR}}\Pr(g(Z(f(i)))\not=i)\le \eps_n$, where $Z(f(i))$ is the output of channel $W^n$ obtained by passing the codeword $f(i)$ through the channel $W^n$.
\end{itemize}
\end{defi}
\begin{defi}
The rate $R$ is said to be {\em achievable} if there exists a sequence of $(n,2^{nR},\eps_n)$-codes for $W^n$ with $\eps_n\to 0$ as $n\to \infty$.
\end{defi}
Let $W=\{W^n\}_{n=1}^{\infty}$ and let $C=\sup\{R: R\ \mbox{is achievable}\}$ be the {\em operational capacity} of channel $W$. In this work, we will show that $C$ can be characterized by several other information capacities. The first such quantity is the {\em Shannon capacity} $$
C_{\mathrm{Shannon}}=\lim_{n\to \infty}\frac{1}{n}\sup_{p_{X_1^n}(\cdot)}I(X_1^n;Z(X_1^n)|s_0),
$$
where $I(X_1^n;Z(X_1^n)|s_0)$ is the mutual information between the input $X_1^n$ and the output $Z(X_1^n)$ when the initial state of $W_2^n$ is fixed to $s_0$.
The second is the {\em stationary capacity} $C_{\mathrm{S}}$ defined as
$$
C_{\mathrm{S}}=\sup_{X}\lim_{n\to\infty}\frac{1}{n}I(X_1^n;Z(X_1^n)|s_0),
$$
where the supremum is taken over all stationary and ergodic input processes $X$.
The final quantity of interest is the {\em $m^{\mathrm{th}}$-order Markov capcity} $C_{\mathrm{Markov}}^{(m)}$ defined as 
$$
C_{\mathrm{Markov}}^{(m)}=\sup_{X}\lim_{n\to\infty}\frac{1}{n}I(X_1^n;Z(X_1^n)|s_0),
$$
where the supremum is taken over all stationary $m^{\mathrm{th}}$-order Markov processes $X$.
\begin{remark}
Although the initial state of $W_2$ is fixed to be $s_0$, all the capacity functions do not depend on $s_0$. This fact will be justified in Corollary~\ref{initialindependent} and Theorem~\ref{shannonindependent}.
\end{remark}

Throughout the paper notations like $p_{X}(x)$ will be used to denote the probability of $X=x$ and similar notations will also be used for conditional probabilities. For a pair of random vectors $(X,Y)$ taking values in $\mathcal{X}\times\mathcal{Y}$, define the information density to be $\iota_{X,Y}(x,y)=\log\frac{p_{X,Y}(x,y)}{p_{X}(x)p_{Y}(y)}$ for $(x,y)\in \mathcal{X}\times\mathcal{Y}$. Sometimes, for notational convenience, we abbreviate $\iota_{X,Y}(X,Y)$ as $\iota_{X,Y}$. The notation $N_{m}(t)\triangleq N(X_{m}^{m+t-1})$ will be used to denote the number of deletions that occurred during the transmission of $X_{m}^{m+t-1}$. When $m=1$, we abbreviate $N_{m}(t)$ as $N(t)$. For a vector $(x_1^{*},\cdots,x_{k}^{*})$, we write $x_1^{*}\cdots x_{k}^{*}$ to denote the concatenation of its symbols.

%Sometimes to emphasize the dependence of the conditional probability $p(y|x)$ on the channel $W$, we use $W(y|x)$ to denote the $p(y|x)$. 

\section{Main Results}\label{main}
Our main contribution is the following theorem:
\begin{thm}\label{capacity} 
For the channel $W$, the following holds:
\begin{align}\label{informationcapacity}
C=C_{\mathrm{Shannon}}=C_{\mathrm{S}}=\lim_{m\to\infty}C_{\mathrm{Markov}}^{(m)}.
\end{align}
\end{thm}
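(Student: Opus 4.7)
My plan is to split the four-term equality into four inequalities: (a) the trivial chain $\lim_m C_{\mathrm{Markov}}^{(m)} \le C_{\mathrm{S}} \le C_{\mathrm{Shannon}}$, (b) the converse $C \le C_{\mathrm{Shannon}}$, (c) the achievability $\lim_m C_{\mathrm{Markov}}^{(m)} \le C$, and (d) the Markov approximation $C_{\mathrm{Shannon}} \le \lim_m C_{\mathrm{Markov}}^{(m)}$. The inclusions in (a) follow immediately from the definitional nestings ``stationary Markov $\subset$ stationary ergodic $\subset$ arbitrary input processes'' together with the existence of the relevant $\lim_n$'s, which I would establish via a sub-additivity argument for $I(X_1^n;Z(X_1^n)\mid s_0)$ in the preceding properties section.

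For (b), I apply Fano's inequality to an arbitrary $(n,2^{nR},\eps_n)$-code: letting $M$ be the uniform message and $X_1^n=f(M)$, Fano yields $H(M\mid Z(X_1^n))\le 1+nR\eps_n$, whence $nR\le \sup_{p_{X_1^n}} I(X_1^n;Z(X_1^n)\mid s_0)+1+nR\eps_n$. Dividing by $n$ and letting $n\to\infty$ with $\eps_n\to 0$ produces $R\le C_{\mathrm{Shannon}}$.

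For (c), I would show that every stationary $m$-th order Markov input $X$ induces a jointly stationary and ergodic process $(X,Z(X))$: the deletion pattern is an i.i.d.\ Bernoulli sequence, the FSC is indecomposable, and hence the extended state process on $\mathcal{X}\times\mathcal{S}\times\{0,1\}$ is stationary and ergodic. A joint-typicality random-coding argument in the spirit of Dobrushin~\cite{dubrushin67}, modified to handle the random output length (which concentrates around $(1-d)n$ by Hoeffding's inequality), then produces a code sequence of rate arbitrarily close to $\lim_n (1/n) I(X_1^n;Z(X_1^n)\mid s_0)$ with vanishing error probability, giving $C_{\mathrm{Markov}}^{(m)}\le C$ for every $m$.

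The heart of the argument, and the main obstacle, is (d). Given $\eta>0$, I would choose $n_0$ and an input distribution $p^{\star}_{X_1^{n_0}}$ with $(1/n_0) I(X_1^{n_0};Z(X_1^{n_0})\mid s_0)\ge C_{\mathrm{Shannon}}-\eta$. For every $m\ge n_0$ I then construct a stationary $m$-th order Markov process $\tilde X$ whose $n_0$-block marginal equals $p^{\star}_{X_1^{n_0}}$, obtained by tiling the distribution across non-overlapping blocks and symmetrizing over the $n_0$ cyclic time shifts (the symmetrization preserves the Markov order). It remains to show $\lim_n (1/n) I(\tilde X_1^n;Z(\tilde X_1^n)\mid s_0)\ge (1/n_0) I(X_1^{n_0};Z(X_1^{n_0})\mid s_0)-o_{n_0}(1)$. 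Two sources of loss must be controlled: first, deletions that straddle block boundaries entangle adjacent $n_0$-blocks, but since only $O(1)$ such deletions occur per block on average this contributes an $O((\log|\mathcal{Z}|)/n_0)$ rate penalty, which I would bound via a block-decomposition of the mutual information together with a concentration estimate on boundary-crossing deletions; second, the FSC state at the start of each block differs from the canonical $s_0$, but by indecomposability $|p(s_n\mid x_1^n,s_0)-p(s_n\mid x_1^n,s_0')|\le \eps$ after $O(1)$ steps, which translates via standard continuity bounds into a vanishing mutual-information perturbation. Combining these estimates and sending first $m\to\infty$ then $n_0\to\infty$ drives the overall gap to $O(\eta)$, yielding (d). Steps (a)--(d) together pinch all four quantities to a common value.
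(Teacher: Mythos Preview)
Your decomposition (a)--(d) closes logically, and (a), (b) are fine. The real problem is step~(d).

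The construction you describe in (d) does not produce a Markov process. Tiling $p^{\star}_{X_1^{n_0}}$ into i.i.d.\ blocks yields a block-independent process that is neither stationary nor Markov (the conditional law of $X_{k+1}$ given the past depends on where $k$ sits within its block); randomising the phase \`a la Feinstein makes the result stationary and ergodic but still \emph{not} Markov of any finite order, so the parenthetical ``the symmetrization preserves the Markov order'' is false. Nor can you bypass this by taking the stationary $(n_0{-}1)$-th order Markov chain with marginal $p^{\star}_{X_1^{n_0}}$: the optimiser $p^{\star}$ of $\sup_{p_{X_1^{n_0}}}I(X_1^{n_0};Z(X_1^{n_0})\mid s_0)$ need not be shift-consistent, so no such stationary chain exists in general. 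This is why the paper splits your (d) into two steps. First it proves $C_{\mathrm{Shannon}}\le C_{\mathrm S}$ via exactly the Feinstein stationarisation you mention, obtaining a genuine stationary ergodic process $\bar X$ with $I(\bar X;Z(\bar X))$ close to $C_{\mathrm{Shannon}}$. Only then does it prove $C_{\mathrm S}\le\lim_m C_{\mathrm{Markov}}^{(m)}$ by replacing a near-optimal stationary ergodic $X$ with the $(n{-}1)$-th order Markov chain $\hat X$ sharing its (now shift-consistent) $n$-marginal. The detour is not cosmetic: the Chen--Siegel comparison used there relies on $H(\hat X)=H(X_n\mid X_1^{n-1})\ge H(X_0\mid X_{-\infty}^{-1})=H(X)$, an inequality that needs $X$ to already be a stationary \emph{process} and has no analogue when one starts from a bare finite-dimensional distribution. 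Your block-boundary and indecomposability estimates address only the conditional-entropy side and cannot compensate for this missing entropy inequality.

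A secondary issue sits in (c): the assertion that ``$(X,Z(X))$ is jointly stationary and ergodic'' is ill-posed, since $Z(X)$ has random length and is not indexed by the same clock as $X$, so the usual SMB/joint-typicality machinery does not apply directly. The paper handles achievability (in fact of $C_{\mathrm{Shannon}}$, not merely $C_{\mathrm{Markov}}^{(m)}$) by building an auxiliary indecomposable FSC $\hat W$ over the super-alphabet $\mathcal{X}^n$ that carries per-block deletion counts as side information, applying the SMB theorem (Theorem~\ref{smbfsc}) to $\hat W$, and then invoking Dobrushin's Lemma~\ref{dobrushinlemma} to show that stripping this side information perturbs the information density by at most $\log\binom{nk+k}{k}=O\bigl(k\,h(1/(n{+}1))\bigr)$, negligible after normalisation by $nk$. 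Your sketch offers no mechanism for obtaining an AEP for the variable-length output, and that is where the technical work in achievability actually lies.
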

A direct consequence of Theorem~\ref{capacity} is that a sequence of Markov processes with increasing orders asymptotically achieves the capacity of the concatenated channel $W$. In particular, Markov processes with increasing orders achieve the capacity of the deletion channel $W_1$. 

When $|\mathcal{X}|=2$, $W_1$ is known as a {\em binary deletion channel}. A process $X$ is said to be {\em regular hidden-Markov} if both $(X,S)$ and $S$ are stationary irreducible and aperiodic Markov processes. In~\cite{idodeletion}, for a binary deletion channel $W_1$ driven by a regular hidden-Markov input process $X$, the authors showed that the strong polarization phenomenon holds for $(X,Y(X))$. Using this fact, they constructed a sequence of coding schemes $\{\mathcal{C}_k\}$ for the input $X$ with codeword lengths $\{2^k\}$ (i.e.,\ $k$ denotes the levels of the polar transform) and rates $\{R_k\}$. They proved the following theorem:  
\begin{thm}\cite{idodeletion}\label{idopolar}
Fix a regular hidden-Markov input process $X$. For any fixed $\gamma\in(0,1/3)$ and arbitrary $\eps>0$, there is an $N$ such that for polarization level $k\ge N$, the rate of the code $R_k\ge  I(X;Y(X))-\eps$  and the probability of decoding error of $\{\mathcal{C}_k\}$ is upper bounded by $2^{-2^{k\gamma }}$.
\end{thm}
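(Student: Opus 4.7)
The plan is to establish strong polarization for the joint input/output pair $(X, Y(X))$ formed by passing the regular hidden-Markov process $X$ through the deletion channel $W_1$, and then to read off both the rate statement and the error bound from the polarization statement via the standard polar coding recipe. Concretely, for $N = 2^k$, let $G_N = F^{\otimes k}$ with $F = \bigl(\begin{smallmatrix} 1 & 0 \\ 1 & 1 \end{smallmatrix}\bigr)$ over $\mathbb{F}_2$, define $U_1^N = X_1^N G_N$, and study the conditional entropies $H_i^{(k)} := H(U_i \mid U_1^{i-1}, Y(X_1^N))$ for $1 \le i \le N$. The coding theorem will follow once one knows that the number of indices with $H_i^{(k)} \le 2^{-2^{k\gamma}}$ is at least $N \cdot (I(X;Y(X)) - \eps)$ for all sufficiently large $k$.

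The technical heart of the argument is showing strong polarization in a setting with memory. Classical Ar\i{}kan polarization does not apply directly, because both the input process $X$ and the deletion channel carry memory, and because the alignment between $X_1^N$ and $Y(X_1^N)$ is unknown to the receiver. I would follow the Shuval--Tal template developed in \cite{idoFSC}: identify a ``trellis state'' $T_i$ that records the hidden-Markov state of $X$ at time $i$ together with a sufficient summary of the alignment between $X_1^i$ and the observed prefix of $Y(X)$; prove a forgetting lemma saying that the conditional law of $T_i$ is insensitive to conditioning far in the past; and use this mixing estimate to run a block-martingale local-polarization argument. This would establish, for any fixed $\gamma \in (0, 1/3)$, that $|\{ i : H_i^{(k)} \le 2^{-2^{k\gamma}} \}|/N \to I(X; Y(X))$ as $k \to \infty$, with the usual Bhattacharyya-type bound giving matching $2^{-2^{k\gamma}}$ control on the block error probability contributed by each good index.

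Once polarization is in hand, the code $\mathcal{C}_k$ is constructed in the standard way: pick the information set $\mathcal{A}_k := \{ i : H_i^{(k)} \le 2^{-2^{k\gamma}}\}$, freeze the complementary indices to values shared with the decoder, and decode by successive cancellation using a Tal--Vardy-style trellis decoder that maintains, for each step $i$, the posterior over the joint state $T_i$ given $Y(X_1^N)$ and the decoded prefix $\hat U_1^{i-1}$. The rate is $R_k = |\mathcal{A}_k|/N \ge I(X; Y(X)) - \eps$ by the polarization statement, and a union bound over $\mathcal{A}_k$ yields a block error probability at most $N \cdot 2^{-2^{k\gamma}}$; the polynomial prefactor $N = 2^k$ is absorbed by shrinking $\gamma$ infinitesimally while remaining inside $(0,1/3)$.

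The main obstacle is the forgetting lemma underlying the polarization step. In an FSC the state lives in a fixed finite set $\mathcal{S}$, so the mixing argument of \cite{idoFSC} gives uniform-in-input contraction of the state distribution. For the deletion channel the natural alignment variable, namely how many symbols of $X$ have been consumed to produce a given prefix of $Y(X)$, has unbounded support, so one must exploit both the regularity (irreducibility and aperiodicity) of the hidden-Markov input and the concentration of $N(t)$ around $dt$ to show that the decoder's posterior over alignments admits an effectively finite sufficient statistic whose mixing rate is uniform in the input. Making this quantitative enough to feed into the local-polarization machinery is the crux of \cite{idodeletion}; the remainder of the argument is then a mechanical application of the polar coding pipeline.
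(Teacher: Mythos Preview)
This theorem is not proved in the paper at all: it is quoted verbatim from \cite{idodeletion} and used as a black box in the proof of Corollary~\ref{polar}. The paper's contribution is complementary---it shows that irreducible aperiodic Markov inputs approach capacity, so that the rate $I(X;Y(X))$ appearing in Theorem~\ref{idopolar} can be made arbitrarily close to $C$---but it makes no attempt to reprove the polarization result itself.

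Your sketch is a plausible high-level outline of how the argument in \cite{idodeletion} actually goes (trellis state, forgetting/mixing, local polarization in the Shuval--Tal style, successive-cancellation decoding), and your identification of the unbounded-alignment issue as the main technical obstacle is accurate. But none of this is relevant to a comparison with the present paper, because the present paper simply cites the result. If the assignment was to supply a proof where the paper gives one, the correct answer here is that no proof is expected: the statement is an imported theorem, and you should treat it as given.
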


As can be seen from the definition of a regular hidden-Markov process, an irreducible and aperiodic Markov process is regular hidden-Markov. In the following corollary, we show that the sequence of Markov processes that asymptotically achieves $C$ in Theorem~\ref{capacity} can be chosen to be irreducible and ergodic. 

\begin{corollary}\label{polar}
Let $C$ be the capacity of the deletion channel $W_1$. Then for any $\eps>0$, there is an irreducible and aperiodic Markov process $X$ and an integer $M(X, \eps)$ such that for $k\ge M(X,\eps)$, the rate $R_k$ of the coding scheme in~\cite{idodeletion} associated with the input process $X$ satisfies $R_k\ge C-\eps$. 
\end{corollary}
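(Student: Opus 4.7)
The plan is to combine the Markov-approximation portion of Theorem~\ref{capacity} with the polar-coding result of Theorem~\ref{idopolar}, bridging the two by a small ``smoothing'' perturbation that converts an arbitrary stationary Markov input into one that is irreducible and aperiodic without appreciably changing its mutual-information rate. The deletion channel $W_1$ is the special case of $W$ in which $W_2$ is trivial, so Theorem~\ref{capacity} applies. Fix $\eps>0$. Using $C=\lim_{m\to\infty}C_{\mathrm{Markov}}^{(m)}$, choose an order $m$ and a stationary $m$-th order Markov input $X'$ whose mutual-information rate $I(X')=\lim_{n\to\infty}\tfrac{1}{n}I(X'^n_1;Y(X'^n_1))$ satisfies $I(X')\ge C-\eps/3$. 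Regarded as a first-order Markov chain on $\mathcal{X}^m$, the transition kernel $P_{X'}$ may have zero entries, so $X'$ need not be irreducible or aperiodic.

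Next, I would smooth the kernel. Let $X^{(\delta)}$ be the $m$-th order Markov process with transitions
\[
P^{(\delta)}(x_{t+1}\mid x_{t-m+1}^{t})=(1-\delta)\,P_{X'}(x_{t+1}\mid x_{t-m+1}^{t})+\frac{\delta}{|\mathcal{X}|},\qquad \delta\in(0,1).
\]
Every $(m+1)$-tuple now has strictly positive probability, so $X^{(\delta)}$, viewed as a first-order chain on $\mathcal{X}^m$, is irreducible and aperiodic; in particular it is regular hidden-Markov in the sense of Theorem~\ref{idopolar}. The key step is to show that
\[
I(X^{(\delta)};Y(X^{(\delta)}))\longrightarrow I(X';Y(X'))\quad\text{as}\ \delta\to 0^+,
\]
after which I would fix a $\delta>0$ so small that $I(X^{(\delta)};Y(X^{(\delta)}))\ge C-2\eps/3$.

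The main obstacle is exactly this continuity step. For any fixed block length $n$, the quantity $\tfrac{1}{n}I(X^{(\delta),n}_1;Y(X^{(\delta),n}_1))$ is a continuous (indeed polynomial) function of $\delta$, because the joint law of $(X_1^n,Y(X_1^n))$ is built from products of transition probabilities and the deletion probability; the real difficulty is the interchange of $\lim_\delta$ with $\lim_n$. I would handle this by coupling $X^{(\delta)}$ and $X'$ on a common probability space so that at each coordinate the two processes disagree with probability $O(\delta)$, which yields a block-wise Hamming distance of expected order $\delta n$. A Fano-type inequality applied to the resulting coupling then bounds the total-variation distance between the output laws through $W_1$ by an expression linear in $\delta$ and sublinear in $n$, and the continuity of mutual information under total-variation perturbations of the joint distribution gives $|I(X^{(\delta)})-I(X')|\to 0$. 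A convenient alternative is to invoke the subadditive/ergodic machinery that will be developed in Section~\ref{property}, which expresses $I(X)$ as a limit of quantities that are visibly continuous in the input transition kernel.

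Finally, having obtained an irreducible aperiodic Markov input $X:=X^{(\delta)}$ with $I(X;Y(X))\ge C-2\eps/3$, I would apply Theorem~\ref{idopolar} to $X$ with tolerance $\eps/3$: there exists $M(X,\eps)$ such that for every $k\ge M(X,\eps)$, the polar code $\mathcal{C}_k$ of \cite{idodeletion} has rate $R_k\ge I(X;Y(X))-\eps/3\ge C-\eps$, completing the proof.
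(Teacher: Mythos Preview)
Your proposal follows the same overall architecture as the paper: use Theorem~\ref{capacity} to obtain a stationary $m$-th order Markov input close to capacity, perturb its transition kernel to make it irreducible and aperiodic, and then invoke Theorem~\ref{idopolar}. The perturbation you use (convex combination with the uniform kernel) is slightly different from the paper's (which perturbs the $(m{+}1)$-marginal $p_{\bar X_1^{m+1}}$ directly), but this is cosmetic.

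The one place where your write-up is not yet a proof is the coupling route to continuity. The coupling you describe gives a total-variation bound on the $n$-block joint law of order $n\delta$, which grows with $n$; plugging this into the standard entropy-continuity inequality yields a per-symbol error of order $\delta\log(n)$ or worse, and the ``sublinear in $n$'' claim does not follow without substantially more work. The paper sidesteps this entirely by the method you list as your second option: it uses the finite-$n$ continuity of $H(X_1^n\mid Z(X_1^n))$ and $H(X_{m+1}\mid X_1^m)$ in the kernel, and then uses Proposition~\ref{blockdifferencestationary} (the ``block-difference'' bound for stationary inputs) to relate the mutual-information rate to the fixed-$n$ block quantity with an error that depends only on $|\mathcal X|,|\mathcal Z|,|\mathcal S|,n,t$ and not on the particular kernel. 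That uniformity is exactly what lets one interchange $\lim_\delta$ and $\lim_n$. If you replace your coupling paragraph with an explicit appeal to Proposition~\ref{blockdifferencestationary} in the same way, your proof coincides with the paper's.
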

\begin{remark}
One implication of Corollary~\ref{polar} is that the coding scheme constructed in~\cite{idodeletion} is capacity-achieving.
\end{remark}
\begin{remark}
One can apply similar ideas from the proof of Corollary~\ref{polar} to the indecomposable FSC $W_2$ to show that there is a sequence of stationary, irreducible, and aperiodic Markov chains that approaches the capacity of $W_2$ as the Markovian orders increases to infinity.
\end{remark}

\section{Preliminaries}\label{property}
In this section we first derive several important properties concerning our channel model. The first important property is that $W$ is ``indecomposable''.
\begin{lemma}\label{indecomposable}
 Let $S_{k-N(k)}$ be the channel state of $W$ after the transmission of $x_1^k$. Then for any $\eps>0$, there exists an integer $K(\eps)$ such that if $k\ge K(\eps)$, we have 
\begin{align}\label{indecomposability}
\sup_{s\in\mathcal{S}}|\Pr(S_{k-N(k)}=s|X_1^k=x_1^k,S_0=s_0)-\Pr(S_{k-N(k)}=s|X_1^k=x_1^k,S_0=s_0^\prime)|\le \eps,
\end{align}
for any initial states $s_0$, $s_0^{\prime}$, and any input $x_1^k$.
\end{lemma}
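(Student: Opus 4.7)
The plan is to exploit two facts: (i) the deletion channel $W_1$ is functionally independent of the FSC's initial state $s_0$, so conditioning on the intermediate output $Y(x_1^k)$ factorises cleanly; and (ii) the indecomposability of $W_2$ provides a uniform ``forgetting'' of its initial state once enough symbols have been processed.

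\textbf{Step 1 (Factorisation).} Conditional on $X_1^k = x_1^k$, the intermediate word $Y(x_1^k)$ takes values $x^*\in\mathcal{X}^{*k}$ with probability $W_1^k(x^*\mid x_1^k)$, and this distribution does not involve $s_0$. Writing $q(s\mid x^*,s_0) := \Pr(S_{\ell(x^*)}=s\mid X_1^{\ell(x^*)}=x^*,\, S_0=s_0)$ for the marginal state law of $W_2$ after processing $x^*$, the total-probability formula gives
\[
\Pr(S_{k-N(k)}=s\mid X_1^k=x_1^k, S_0=s_0) = \sum_{x^*\in\mathcal{X}^{*k}} W_1^k(x^*\mid x_1^k)\, q(s\mid x^*,s_0),
\]
so the quantity to bound is $\sum_{x^*} W_1^k(x^*\mid x_1^k)\bigl[q(s\mid x^*,s_0)-q(s\mid x^*,s_0^\prime)\bigr]$.

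\textbf{Step 2 (Split and use indecomposability).} For any $\eps^\prime>0$, indecomposability of $W_2$ yields a threshold $N=N(\eps^\prime)$ such that $|q(s\mid u_1^n,s_0)-q(s\mid u_1^n,s_0^\prime)|\le\eps^\prime$ for all $n\ge N$ and all $u_1^n, s_0, s_0^\prime, s$. Split the sum according to whether $\ell(x^*)\ge N$ or $\ell(x^*)<N$: the first piece contributes at most $\eps^\prime$ (since $W_1^k(\cdot\mid x_1^k)$ is a probability distribution), and the second is at most $2\Pr\bigl(\ell(Y(x_1^k))<N\mid X_1^k=x_1^k\bigr)$ via the trivial bound $|q(\cdot)-q(\cdot)|\le 1$.

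\textbf{Step 3 (Tail of the surviving length).} From the explicit form $W_1^k(x^*\mid x_1^k)=(1-d)^{\ell(x^*)}d^{k-\ell(x^*)}K(x_1^k,x^*)$ one checks that $N(k)\sim\mathrm{Binomial}(k,d)$ \emph{independently} of $x_1^k$, so $\ell(Y(x_1^k))=k-N(k)$ has the same law for every input. Assuming $d<1$, Hoeffding's inequality gives $\Pr(\ell(Y(x_1^k))<N)\le \exp\bigl(-2(k(1-d)-N)^2/k\bigr)$ for $k$ large enough, which tends to $0$. Setting $\eps^\prime=\eps/2$ and then choosing $K(\eps)$ so that the tail is at most $\eps/4$ yields the uniform bound $\eps$. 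The main obstacle is really just the bookkeeping in Step 1---justifying the factorisation and confirming that the intermediate law is initial-state-free---after which Steps 2--3 are a standard truncation argument; note the implicit assumption $d<1$, since otherwise $\ell(Y(x_1^k))\equiv 0$ and $S_{k-N(k)}\equiv S_0$ depends entirely on $s_0$.
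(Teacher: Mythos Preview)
Your proposal is correct and follows essentially the same approach as the paper: factorise through the deletion-channel output $Y(x_1^k)$ (whose law is independent of $s_0$), split according to whether the surviving length $\ell(Y(x_1^k))=k-N(k)$ exceeds the indecomposability threshold of $W_2$, apply indecomposability on the long part, and bound the short part via a binomial tail. The only cosmetic differences are that you quote Hoeffding explicitly where the paper simply uses that the binomial tail vanishes, and you (rightly) make the implicit hypothesis $d<1$ explicit.
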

\begin{proof}
The proof can be found in Appendix~\ref{indecomposableproof}.
\end{proof}
Throughout the rest of the paper for any given $\eps>0$, $K(\eps)$ is always chosen such that~(\ref{indecomposability}) holds. The following lemma is from~\cite[Lemma 1, pp.~112]{gallagerbook} and is used several times in the proof. For easy reference, we include it as follows.
\begin{lemma}\cite{gallagerbook}\label{conditionallemma}
Let $(X,Y, Z, S)$ be a random vector over $\mathcal{X}\times\mathcal{Y}\times\mathcal{Z}\times\mathcal{S}$, where $\mathcal{X},\ \mathcal{Y},\ \mathcal{Z},\ $ and $\mathcal{S}$ are all finite sets. Then
$$
|I(X;Y|Z,S)-I(X;Y|Z)|\le \log|\mathcal{S}|.
$$
\end{lemma}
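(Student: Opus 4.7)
The plan is to reduce the claimed bound to a statement about mutual informations with $S$ on the right-hand side, and then bound each of those by the entropy of $S$, which is at most $\log|\mathcal{S}|$ since $\mathcal{S}$ is finite.

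First, I would apply the chain rule for conditional mutual information in two different orders to the joint quantity $I(X;Y,S\mid Z)$. Expanding once as $I(X;Y,S\mid Z)=I(X;S\mid Z)+I(X;Y\mid S,Z)$ and once as $I(X;Y,S\mid Z)=I(X;Y\mid Z)+I(X;S\mid Y,Z)$, and equating the two expressions, gives the clean identity
\begin{equation*}
I(X;Y\mid Z,S)-I(X;Y\mid Z)=I(X;S\mid Y,Z)-I(X;S\mid Z).
\end{equation*}
This is the key structural step: the ``cost'' of conditioning on $S$ inside $I(X;Y\mid\cdot)$ is exactly the difference of two conditional mutual informations about $S$.

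Next, I would bound the right-hand side. Both $I(X;S\mid Y,Z)$ and $I(X;S\mid Z)$ are nonnegative, and each is upper bounded by the corresponding conditional entropy of $S$, which in turn satisfies $H(S\mid Y,Z)\le H(S)\le \log|\mathcal{S}|$ and $H(S\mid Z)\le H(S)\le \log|\mathcal{S}|$ because $\mathcal{S}$ is finite. Hence each term lies in the interval $[0,\log|\mathcal{S}|]$, so their difference is at most $\log|\mathcal{S}|$ in absolute value, yielding the claimed inequality.

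There is essentially no obstacle here beyond correctly applying the chain rule; the argument is a short two-line identity followed by an entropy bound. If I wanted to be maximally careful, I would spell out the first step by writing $I(X;Y,S\mid Z)=H(X\mid Z)-H(X\mid Y,S,Z)$ and then splitting it in two ways via the middle terms $H(X\mid S,Z)$ and $H(X\mid Y,Z)$ respectively, which is really the same calculation in entropy form and avoids any hidden manipulations. Since the lemma is quoted directly from Gallager's book, the proof proposal above coincides with the standard textbook argument.
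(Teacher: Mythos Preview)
Your proof is correct and is exactly the standard argument from Gallager's book. Note that the paper itself does not give a proof of this lemma; it simply cites \cite[Lemma~1, p.~112]{gallagerbook} and restates the result for easy reference, so there is nothing further to compare.
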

The following lemma says that mutual information does not increase too much when additional information about the numbers of deletions over different long blocks is available at the receiver.

\begin{lemma}\label{sideinformation}
Let $m$ and $n$ be positive integers and $\{t_i:0\le i\le m\}$ be a set of integers such that $t_0=0<1\le t_1<t_2<\cdots <t_m=n$. Then 
{\small\begin{align}\label{sideinformationequation}
0\le  I(X_1^{n};Z(X_1^n),N(t_1),\cdots,N_{t_{m-1}+1}(t_{m}-t_{m-1})|s_0)-I(X_1^{n};Z(X_1^n)|s_0)\le \sum_{i=1}^{m}\log(t_{i}-t_{i-1}+1).
\end{align}}
\end{lemma}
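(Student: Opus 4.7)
The plan is to split the claim into the lower and upper bounds, both of which follow from the chain rule for mutual information together with standard entropy bounds; the key observation is that each block deletion count $N_{t_{i-1}+1}(t_i-t_{i-1})$ takes at most $t_i-t_{i-1}+1$ distinct values.

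For the lower bound, I would simply apply the chain rule:
\begin{align*}
I(X_1^n; Z(X_1^n), N(t_1),\ldots,N_{t_{m-1}+1}(t_m-t_{m-1}) \mid s_0)
&= I(X_1^n; Z(X_1^n)\mid s_0) \\
&\quad + I(X_1^n; N(t_1),\ldots,N_{t_{m-1}+1}(t_m-t_{m-1})\mid Z(X_1^n),s_0),
\end{align*}
and observe that the second term is nonnegative. This gives the left-hand inequality at no cost.

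For the upper bound, the same chain-rule identity shows that the difference in question equals $I(X_1^n; N(t_1),\ldots,N_{t_{m-1}+1}(t_m-t_{m-1}) \mid Z(X_1^n),s_0)$. I would then bound this conditional mutual information by the unconditioned entropy $H(N(t_1),\ldots,N_{t_{m-1}+1}(t_m-t_{m-1}))$, using the elementary facts that mutual information is dominated by the entropy of either argument and that conditioning reduces entropy. Subadditivity of entropy then gives
\[
H\bigl(N(t_1),\ldots,N_{t_{m-1}+1}(t_m-t_{m-1})\bigr) \le \sum_{i=1}^{m} H\bigl(N_{t_{i-1}+1}(t_i-t_{i-1})\bigr).
\]
For each $i$, the random variable $N_{t_{i-1}+1}(t_i-t_{i-1})$ counts deletions in a block of length $t_i-t_{i-1}$, so it is supported on $\{0,1,\ldots,t_i-t_{i-1}\}$, a set of size $t_i-t_{i-1}+1$. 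The uniform bound on entropy of a random variable by the log of the size of its support then yields $H(N_{t_{i-1}+1}(t_i-t_{i-1})) \le \log(t_i-t_{i-1}+1)$, completing the chain of inequalities.

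There is no real obstacle here; the proof is purely information-theoretic bookkeeping and does not require any channel-specific property such as the indecomposability established in Lemma~\ref{indecomposable}. The only thing to be careful about is ensuring that the decomposition of the combined vector into per-block counts is valid (so that subadditivity applies cleanly) and that the support size $t_i - t_{i-1}+1$ is used rather than the naive bound $2^{t_i - t_{i-1}}$ coming from a binary deletion pattern, since the latter would be far too weak for the intended application of the lemma later in the paper.
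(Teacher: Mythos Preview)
Your proposal is correct and follows essentially the same route as the paper: apply the chain rule to isolate the difference as $I(X_1^n; N(t_1),\ldots \mid Z(X_1^n),s_0)$, observe that it is nonnegative (lower bound), and bound it above by $\sum_i \log(t_i-t_{i-1}+1)$ using the fact that each $N_{t_{i-1}+1}(t_i-t_{i-1})$ takes at most $t_i-t_{i-1}+1$ values. The paper states the last step in one line without spelling out the entropy/subadditivity chain you wrote, but your more detailed justification is exactly what underlies that line.
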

\begin{proof}
The proof can be found in Appendix~\ref{sideinformationproof}.
\end{proof}
The following proposition says that for $W$, the difference between the normalized mutual informations given two different initial states is small . 
%\begin{pr}\label{blockdifferenceindependent}
%Let $X_1^n$ and $X_{n+1}^{2n}$ be independent and have the same distribution. Then for $\eps>0$ and $n\ge k\ge  K(\eps)$,
%{\small\begin{align}
%&\frac{|I(X_1^n;Z(X_1^n)|s_0)-I(X_{n+1}^{2n};Z(X_{n+1}^{2n})|S_{n-N(n)})|}{n}\notag\\
%&\hspace{2cm}\le \frac{2(\log|\mathcal{S}|+\log(k+1)+k\log|\mathcal{X}|)+(n-k)\eps\log|\mathcal{X}|}{n}.
%\end{align}}

%\end{pr}
\begin{pr}\label{blockdifferenceindependent}
Let $\eps>0$ be given. Then for any two different initial states $s_0$ and $s_0^\prime$ and for all $n\ge k\ge  K(\eps)$,
{\small\begin{align}
&\frac{|I(X_1^n;Z(X_1^n)|s_0)-I(X_{1}^{n};Z(X_{1}^{n})|s_0^\prime)|}{n}\notag\\
&\hspace{2cm}\le \frac{2(\log|\mathcal{S}|+\log(k+1)+k\log|\mathcal{X}|)+(n-k)\eps\log|\mathcal{X}|}{n}.
\end{align}}

\end{pr}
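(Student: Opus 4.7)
The plan is to split $X_1^n = (X_1^k, X_{k+1}^n)$, absorb the prefix's contribution via a crude $k\log|\mathcal{X}|$ penalty, and use Lemma~\ref{indecomposable} to show that the suffix's contribution is almost independent of the initial state. The role of $k$ is to let the channel's state ``forget'' $s_0$; the role of $n-k$ is to keep the density of the mutual information close to its true value.

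First, by the chain rule,
\begin{equation*}
I(X_1^n; Z(X_1^n)\mid s_0) = I(X_1^k; Z(X_1^n)\mid s_0) + I(X_{k+1}^n; Z(X_1^n)\mid X_1^k, s_0),
\end{equation*}
and the first summand is bounded by $H(X_1^k)\le k\log|\mathcal{X}|$. For the second, I would supply the side information $N(k)$ at the additional cost of at most $\log(k+1)$ (its entropy). Writing $Z(X_1^n) = (Z_1, Z_2)$ with $Z_1$ of length $k-N(k)$ the output of the first block and $Z_2$ the tail, the causality of the channel (no feedback, independent deletions over disjoint blocks) gives that $(Z_1, N(k))$ is independent of $X_{k+1}^n$ given $X_1^k$, and so
\begin{equation*}
I(X_{k+1}^n; Z(X_1^n), N(k)\mid X_1^k, s_0) = I(X_{k+1}^n; Z_2\mid Z_1, N(k), X_1^k, s_0).
\end{equation*}

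Next, Lemma~\ref{conditionallemma} lets me insert $S_{k-N(k)}$ into the conditioning at a cost of $\log|\mathcal{S}|$. The key observation is that once $(X_1^k, S_{k-N(k)}) = (x_1^k, s)$ is fixed, the conditional law of $X_{k+1}^n$ depends only on $x_1^k$ (the input is independent of channel randomness) while the conditional law of $Z_2$ given $X_{k+1}^n$ depends only on $s$. Hence
\begin{equation*}
g(x_1^k, s)\triangleq I(X_{k+1}^n; Z_2\mid X_1^k = x_1^k, S_{k-N(k)} = s)
\end{equation*}
is a function of $(x_1^k, s)$ alone, satisfies $g(x_1^k, s)\le (n-k)\log|\mathcal{X}|$, and
\begin{equation*}
I(X_{k+1}^n; Z_2\mid Z_1, N(k), X_1^k, S_{k-N(k)}, s_0) = \sum_{x_1^k, s} p_{X_1^k}(x_1^k)\, p(s\mid x_1^k, s_0)\, g(x_1^k, s).
\end{equation*}
The analogous identity holds with $s_0^\prime$ in place of $s_0$; subtracting and invoking Lemma~\ref{indecomposable} on $|p(s\mid x_1^k, s_0)-p(s\mid x_1^k, s_0^\prime)|$ yields a bound of order $(n-k)\eps\log|\mathcal{X}|$.

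Finally, combining the preceding estimates into a one-sided upper bound on $I(X_1^n; Z(X_1^n)\mid s_0)$ and a matching lower bound on $I(X_1^n; Z(X_1^n)\mid s_0^\prime)$, and then repeating with the roles of $s_0, s_0^\prime$ swapped, produces the factor of $2$ on the term $\log|\mathcal{S}|+\log(k+1)+k\log|\mathcal{X}|$ after dividing by $n$. The main technical obstacle is the conditional-independence bookkeeping in the second step: one must verify carefully that once $S_{k-N(k)}$ is inserted into the conditioning, every remaining dependence on $s_0$ collapses into the single factor $p(s\mid x_1^k, s_0)$, so that only the state-sensitivity controlled by Lemma~\ref{indecomposable} survives and multiplies the $O((n-k)\log|\mathcal{X}|)$ range of $g$.
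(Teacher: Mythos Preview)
Your proposal is correct and follows essentially the same route as the paper: the paper first invokes Lemma~\ref{sideinformation} to replace $Z(X_1^n)$ by the pair $(Z(X_1^k),Z(X_{k+1}^n))$ at cost $\log(k+1)$ and then applies the chain rule, whereas you apply the chain rule first and then reveal $N(k)$; since knowing $(Z(X_1^n),N(k))$ is equivalent to knowing $(Z(X_1^k),Z(X_{k+1}^n))$, the two orderings are interchangeable. The remaining steps---bounding the prefix term by $k\log|\mathcal{X}|$, inserting $S_{k-N(k)}$ via Lemma~\ref{conditionallemma}, collapsing the extra conditioning by conditional independence, and invoking Lemma~\ref{indecomposable} on the resulting weighted sum---match the paper's argument line for line.
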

\begin{proof}
The proof can be found in Appendix~\ref{blockdifferenceindependentproof}.
\end{proof}
\begin{corollary}\label{existenceindependent}
Let $p_{n}(\cdot)$ be a probability mass function on $\mathcal{X}^n$ and $X=\{X_k\}$ be a block independent process with $p_{{X}_{1}^{kn}}(x_1^{kn})=\prod_{i=1}^{k}p_{n}(x_{(i-1)n+1}^{in})$ for any $k$ and $x_{1}^{kn}$. For any $1\le j\le n-1$ and $i\ge 1$, let $X^{(j)}_i\triangleq X_{i+j}$ and $X^{(j)}=\{X^{(j)}_i\}_{i=1}^{\infty}$. Let $Z(X)$ and $Z(X^{(j)})$ be respectively the outputs obtained by passing $X$ and $X^{(j)}$ through the channel $W$. Then 
\begin{itemize}
\item[(i)] For any initial state $s_0$, any $\eps>0$, and all $n\ge k\ge  K(\eps)$,
{\small\begin{align}
&\frac{|I(X_1^n;Z(X_1^n)|s_0)-I(X_{n+1}^{2n};Z(X_{n+1}^{2n})|S_{n-N(n)})|}{n}\notag\\
&\hspace{2cm}\le \frac{2(\log|\mathcal{S}|+\log(k+1)+k\log|\mathcal{X}|)+(n-k)\eps\log|\mathcal{X}|}{n},
\end{align}}
where $S_{n-N(n)}$ is defined as in Lemma~\ref{indecomposable}.
\item[(ii)] For any initial state $s_0$, 
\begin{align}
I(X;Z(X)|s_0)=\lim_{m\to\infty}\frac{1}{m}I(X_1^m;Z(X_1^m)|s_0) \quad\mbox{exists}.
\end{align}  
\item[(iii)] For any initial state $s_0$ and any $1\le j\le n-1$, one has $I(X^{(j)};Z(X^{(j)})|s_0)=I(X;Z(X)|s_0)$.
\item[(iv)] For any initial state $s_0$ and any $\eps>0$, there exists an integer $N(|\mathcal{X}|,|\mathcal{Z}|,|\mathcal{S}|,\eps)$ such that if $n\ge N(|\mathcal{X}|,|\mathcal{X}|,|\mathcal{X}|,\eps)$, then
$$
\left|\lim_{k\to\infty}\frac{I(X_1^{(k)n};Z(X_1^n),\cdots,Z(X_{(k-1)n+1}^{kn})|s_0)}{kn}-\frac{I(X_1^n;Z(X_1^n)|s_0)}{n}\right|\le \eps.
$$
\end{itemize}
\end{corollary}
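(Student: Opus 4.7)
The plan is to prove the four parts in order, using block-independence of $X$, indecomposability of $W$ (Lemma~\ref{indecomposable}), Proposition~\ref{blockdifferenceindependent}, and Lemmas~\ref{conditionallemma} and~\ref{sideinformation}. For part (i), the crucial observation is that by block-independence, conditional on $\{S_{n-N(n)}=s\}$ the pair $(X_{n+1}^{2n}, Z(X_{n+1}^{2n}))$ has the same joint distribution as $(X_1^n, Z(X_1^n))$ with initial state $s_0=s$. Hence
\begin{equation*}
I\bigl(X_{n+1}^{2n}; Z(X_{n+1}^{2n})\,\big|\,S_{n-N(n)}\bigr) = \sum_{s\in\mathcal{S}}\Pr(S_{n-N(n)}=s)\, I(X_1^n; Z(X_1^n)|s_0=s).
\end{equation*}
Subtracting $I(X_1^n; Z(X_1^n)|s_0)$, applying the triangle inequality, and invoking Proposition~\ref{blockdifferenceindependent} uniformly in $s$ gives the claimed bound.

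For part (ii), let $a_m := I(X_1^m; Z(X_1^m)|s_0)$; the aim is convergence of $a_m/m$. I focus first on the subsequence $m=kn$, targeting a quasi-additive estimate $a_{(k+1)n} - a_{kn} = a_n + \eta_{k,n}$, where $\eta_{k,n}$ is small relative to $n$ for large $n$ and stabilizes as $k\to\infty$. The derivation applies the chain rule to write $a_{(k+1)n}-a_{kn} = I(X_{kn+1}^{(k+1)n}; Z(X_1^{(k+1)n})|X_1^{kn}, s_0)$, exploits the sufficient-statistic role of $S_{kn-N(kn)}$ for the channel's future (through Lemma~\ref{conditionallemma}) to reduce this to $I(X_{kn+1}^{(k+1)n}; Z(X_{kn+1}^{(k+1)n})|S_{kn-N(kn)})$ up to an additive $\log|\mathcal{S}|$ term, and finally applies part (i) together with Proposition~\ref{blockdifferenceindependent} to identify the latter with $a_n$ up to $O(n\delta_n)$ for some $\delta_n\to 0$. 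Combined with the fact that the state distribution $S_{kn-N(kn)}$ stabilizes as $k\to\infty$ (by indecomposability), a Ces\`aro/Fekete-style argument then yields convergence of $a_{kn}/(kn)$. The extension to general $m=kn+r$ with $0\le r<n$ follows from the crude bound $|a_m-a_{kn}|\le n\log|\mathcal{X}|$. The main obstacle is the quasi-additivity step: the deletions couple the output across the block boundary so that $Z(X_1^{(k+1)n})$ cannot be directly split into block contributions, necessitating careful use of deletion-count side information with its entropy cost controlled by Lemma~\ref{sideinformation}.

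Part (iii) follows by comparison: $I(X^{(j)}_1^m; Z(X^{(j)}_1^m)|s_0)=I(X_{j+1}^{j+m}; Z(X_{j+1}^{j+m})|s_0)$, and chain-rule decompositions of both this and $I(X_1^m; Z(X_1^m)|s_0)$ into an interior part (common to both) plus boundary partial blocks of total length at most $2n$ show, together with Lemma~\ref{conditionallemma} and Proposition~\ref{blockdifferenceindependent}, that their difference is bounded uniformly in $m$. Dividing by $m$ and letting $m\to\infty$ yields equality of the two limits, both of which exist by the argument of part (ii) adapted to the shifted block-independent structure.

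For part (iv), observe that the list $Z(X_1^n),\ldots,Z(X_{(k-1)n+1}^{kn})$ of per-block outputs is informationally equivalent to $(Z(X_1^{kn}), N_1,\ldots,N_k)$, since the deletion counts $N_i$ are exactly the lengths needed to split the concatenated output. Lemma~\ref{sideinformation} with $t_i=in$ then yields
\begin{equation*}
0\le I\bigl(X_1^{kn}; Z(X_1^n),\ldots,Z(X_{(k-1)n+1}^{kn})\,\big|\,s_0\bigr) - a_{kn} \le k\log(n+1).
\end{equation*}
Dividing by $kn$, taking $k\to\infty$, and invoking part (ii) to identify $\lim_k a_{kn}/(kn)=I(X;Z(X)|s_0)$, the quantity inside the absolute value in (iv) is bounded by $|I(X;Z(X)|s_0)-a_n/n| + \log(n+1)/n$. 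The quasi-additivity from the proof of (ii) further forces the first term to tend to $0$ as $n\to\infty$, so choosing $n$ large enough that both terms are below $\eps/2$ completes the argument.
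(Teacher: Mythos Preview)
Your argument for part (i) is correct and in fact cleaner than the paper's: rather than redoing the block-splitting at level $k$, you observe directly that block independence gives $I(X_{n+1}^{2n};Z(X_{n+1}^{2n})\mid S_{n-N(n)}=s)=I(X_1^n;Z(X_1^n)\mid s_0=s)$ and then invoke Proposition~\ref{blockdifferenceindependent}. Parts (iii) and (iv) are also in the right spirit, with (iii) matching the paper's boundary-stripping strategy and (iv) taking a slightly different (but workable) route through Lemma~\ref{sideinformation} and part (ii).

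However, there is a genuine gap in part (ii). The displayed identity
\[
a_{(k+1)n}-a_{kn}=I\bigl(X_{kn+1}^{(k+1)n};Z(X_1^{(k+1)n})\,\big|\,X_1^{kn},s_0\bigr)
\]
is false: the chain rule decomposes $a_{(k+1)n}$ as $I(X_1^{kn};Z(X_1^{(k+1)n})|s_0)+I(X_{kn+1}^{(k+1)n};Z(X_1^{(k+1)n})|X_1^{kn},s_0)$, and the first summand is \emph{not} $a_{kn}=I(X_1^{kn};Z(X_1^{kn})|s_0)$, because $Z(X_1^{kn})$ cannot be recovered from the longer output $Z(X_1^{(k+1)n})$ without deletion-count side information. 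What one does obtain (after inserting $N(kn)$ via Lemma~\ref{sideinformation}) is only an \emph{inequality}, $a_{(k_1+k_2)n}\ge a_{k_1n}+a_{k_2n}-o(k_1n+k_2n)$, i.e.\ approximate superadditivity; this is exactly what the paper proves and then feeds into Gallager's Fekete-type lemma. Your proposed Ces\`aro route via convergence of increments does not go through as stated, both because the increment identity fails and because indecomposability (Lemma~\ref{indecomposable}) asserts only that the state law becomes \emph{insensitive to the initial state}, not that it converges, so ``$S_{kn-N(kn)}$ stabilizes'' is not available. The fix is to replace the increment argument by the superadditivity/Fekete argument; once (ii) is established that way, your route for (iv) also needs the matching two-sided estimate $|I(X;Z(X)|s_0)-a_n/n|\to 0$ uniformly in $p_n$, which the paper instead sidesteps by comparing $\frac{1}{kn}I(X_1^{kn};Z(X_1^n),\ldots|s_0)$ directly to $a_n/n$ block by block.
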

\begin{proof}
The proof can be found in Appendix~\ref{existenceindependentproof}.
\end{proof}
The following proposition says that for $W$ with a stationary input $X$, the difference between the normalized mutual informations over two long blocks is small.
\begin{pr}\label{blockdifferencestationary}
Let $X=\{X_i\}$ be a stationary input process. Then for any $\eps>0$, any positive integer $k$, and $n\ge t\ge K(\eps)$,  we have that
\begin{align}
&\frac{|I(X_1^n;Z(X_1^n)|s_0)-I(X_{k+1}^{k+n};Z(X_{k+1}^{k+n}))|}{n}\notag\\
&\hspace{1cm}\le \eps|\mathcal{S}|(\log|\mathcal{X}|+2\log|\mathcal{Z}|)+\frac{2(t(\log|\mathcal{X}|+\log|\mathcal{Z}|)+\log(t+1)-2\eps|\mathcal{S}|\log(|\mathcal{S}|\eps))}{n}.
\end{align}
\end{pr}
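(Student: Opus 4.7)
\medskip
\noindent\textbf{Proof plan.} The plan is to compare the two mutual informations by prepending a burn-in of length $t$ to each block, reducing via standard side-information and chain-rule manipulations to a comparison of length-$(n-t)$ ``main'' blocks whose conditional joint laws will be shown to be $\eps|\mathcal{S}|$-close in total variation (using Lemma~\ref{indecomposable}), and finally converting this total-variation closeness to a mutual-information bound via a Fano-type continuity of entropy.

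First, I would apply Lemma~\ref{sideinformation} to each block with the partition $\{0,t,n\}$, furnishing the receiver with $N(t)$ for the original block and $N_{k+1}(t)$ for the shifted block, at cost at most $\log(t+1)$ per block. With this side information, $Z(X_1^n)$ and $Z(X_{k+1}^{k+n})$ decompose unambiguously into a burn-in piece (length at most $t$, entropy at most $t\log|\mathcal{Z}|$) and a main piece. Using the chain rule I then peel off the burn-in inputs $X_1^t,\,X_{k+1}^{k+t}$ (each of entropy at most $t\log|\mathcal{X}|$) and the burn-in outputs, reducing the comparison to two conditional mutual informations on length-$(n-t)$ main blocks (conditioned on matched burn-in input value $x$ and on $s_0$). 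The cost of these manipulations accounts for the $\frac{2(t(\log|\mathcal{X}|+\log|\mathcal{Z}|)+\log(t+1))}{n}$ term in the claimed bound.

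The crux is to bound the total variation distance between the two post-burn-in conditional joint laws on (main input, main output). By stationarity of $X$, the conditional laws of $X_{t+1}^n$ and of $X_{k+t+1}^{k+n}$ given matched burn-in inputs coincide, so the only discrepancy comes from the FSC state distribution at the end of the burn-in, which drives the main-block channel. Writing
\[
p(s\mid x_1^k,x,s_0)=\sum_{s_k}p(s\mid x,s_k)\,p(s_k\mid x_1^k,s_0)
\]
and applying Lemma~\ref{indecomposable} (valid since $t\ge K(\eps)$) to the inner factor yields $|p(s\mid x,s_k)-p(s\mid x,s_0)|\le\eps$, hence $|p(s\mid x_1^k,x,s_0)-p(s\mid x,s_0)|\le\eps$ for every $s$. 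Summing over $s\in\mathcal{S}$ gives a total-variation gap of at most $\eps|\mathcal{S}|$ between the two post-burn-in state laws; by data processing, the induced joint laws on (main input, main output) are also $\eps|\mathcal{S}|$-close in total variation.

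Finally, writing the conditional mutual information as $H(Z_{\mathrm{main}})+H(X_{\mathrm{main}})-H(X_{\mathrm{main}},Z_{\mathrm{main}})$ and noting that stationarity forces $H(X_{\mathrm{main}})$ to agree between the two cases, I would apply the Fano continuity estimate $|H(P)-H(Q)|\le\delta\log|\mathcal{A}|+h(\delta)$ with $\delta=\eps|\mathcal{S}|$ to the entropies on $\mathcal{Z}^{*(n-t)}$ (contributing $\eps|\mathcal{S}|\log|\mathcal{Z}|$ per symbol) and on $\mathcal{X}^{n-t}\times\mathcal{Z}^{*(n-t)}$ (contributing $\eps|\mathcal{S}|(\log|\mathcal{X}|+\log|\mathcal{Z}|)$ per symbol). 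Summing these two with the burn-in overhead produces exactly the announced bound, with the binary-entropy residual accounting for the $\frac{-2\eps|\mathcal{S}|\log(\eps|\mathcal{S}|)}{n}$ correction. The main obstacle is the third paragraph: the state-law comparison must be carried out with the pre-burn-in history $X_1^k$ of arbitrary length in the shifted case, and it is crucial that the per-coordinate $\eps$-perturbation from Lemma~\ref{indecomposable} arises only once (upper-bounding a weighted average of $\eps$-close terms), so that the resulting total-variation gap is exactly $\eps|\mathcal{S}|$ rather than a looser estimate.
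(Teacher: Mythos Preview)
Your proposal is correct and follows essentially the same route as the paper's proof: split each block at position $t$ via Lemma~\ref{sideinformation}, absorb the burn-in pieces at cost $O(t(\log|\mathcal{X}|+\log|\mathcal{Z}|)+\log(t+1))$, use Lemma~\ref{indecomposable} to show the post-burn-in state laws (and hence the main-block joint laws) are $\eps|\mathcal{S}|$-close in total variation, and then invoke the $L^1$-continuity of entropy (Cover--Thomas Theorem~17.3.3) on $H(X_{\mathrm{main}},Z_{\mathrm{main}})$ and $H(Z_{\mathrm{main}})$, with $H(X_{\mathrm{main}})$ matching by stationarity. The only cosmetic difference is that you condition on a matched burn-in input value $x$ and argue pointwise before averaging, whereas the paper works directly with the marginal joint distributions $p_{X_{t+1}^n,Z(X_{t+1}^n)\mid s_0}$ and $p_{X_{k+t+1}^{k+n},Z(X_{k+t+1}^{k+n})}$; since the indecomposability bound is uniform over the incoming state, both packagings yield the same $\eps|\mathcal{S}|$ total-variation estimate and the same final inequality.
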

\begin{proof}
The proof can be found in Appendix~\ref{blockdifferencestationaryproof}.
\end{proof}
%\begin{remark}
%The bounds in Propositions~\ref{blockdifferenceindependent} and~\ref{blockdifferencestationary} does not depend on the distribution of the input process $X$.
%\end{remark}
One consequence of Proposition~\ref{blockdifferencestationary} is that the mutual information rate does not depend on the initial state.
\begin{corollary}\label{initialindependent}
Let $X$ be a stationary input process and $Z(X)$ be the output obtained by passing $X$ through the channel $W$. Then for any pair of initial states $s_0$ and $s_0^\prime$,
$$
I(X;Z(X)|s_0)=\lim_{n\to\infty}\frac{1}{n}I(X_1^n;Z(X_1^n)|s_0),
$$
exists and $I(X;Z(X)|s_0)=I(X;Z(X)|s_0^\prime).$
\end{corollary}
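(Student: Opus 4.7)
The plan is to use Proposition~\ref{blockdifferencestationary} to establish both halves of the claim: first, that for each fixed $s_0$ the limit $\lim_n \frac{1}{n}I(X_1^n;Z(X_1^n)|s_0)$ exists, and second, that it is independent of $s_0$. I would handle the state-independence claim first (it follows almost directly from the proposition, with help from Lemma~\ref{indecomposable}) and then establish existence via an approximate-subadditivity argument rooted in Lemma~\ref{conditionallemma} and the Markov structure of $W$.

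For state-independence, I would apply Proposition~\ref{blockdifferencestationary} twice---with initial states $s_0$ and $s_0'$, using the same $\eps, k, n, t$---so that each of $\frac{1}{n}I(X_1^n;Z(X_1^n)|s_0)$ and $\frac{1}{n}I(X_1^n;Z(X_1^n)|s_0')$ lies within $B(n,t,\eps)$ (the right-hand side of the proposition) of its shifted counterpart. Using Lemma~\ref{indecomposable} to equate the two shifted counterparts to within an arbitrarily small error when $k$ is taken large, the triangle inequality yields
\begin{align*}
\frac{|I(X_1^n;Z(X_1^n)|s_0) - I(X_1^n;Z(X_1^n)|s_0')|}{n} \le 2B(n,t,\eps) + o_k(1),
\end{align*}
and letting $n\to\infty$, $k\to\infty$, and then $\eps\to 0$ drives the RHS to $0$. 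Because this is a finite-$n$ bound on the difference of two mutual informations, it can be used freely inside the existence argument below without circularity.

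For existence, set $I_n := I(X_1^n;Z(X_1^n)|s_0)$ and $a_n := I_n/n$. The plan is to derive an approximate subadditivity of the form $I_{n+m} \le I_n + I_m + \log|\mathcal{S}| + c(\eps) m$ with $c(\eps)\to 0$ as $\eps\to 0$. To do so, I would introduce the boundary channel state $S_{n-N(n)}$ as side information (paying at most $\log|\mathcal{S}|$ by Lemma~\ref{conditionallemma}), split the output via the chain rule as $Z(X_1^{n+m}) = Z(X_1^n) Z(X_{n+1}^{n+m})$, and use the Markov structure of $W$---specifically that, given $X_{n+1}^{n+m}$ and $S_{n-N(n)}$, the second block's output is conditionally independent of everything preceding---to bound the two resulting pieces by $I_n$ and $I(X_{n+1}^{n+m};Z(X_{n+1}^{n+m})|S_{n-N(n)})$. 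Stationarity of $X$ together with the finite-$n$ state-independence bound above then replaces the latter by $I_m + c(\eps) m$. A Fekete-type iteration (write $n = qm+r$, iterate the inequality, and let $q\to\infty$) gives $\limsup_n a_n \le a_m + c(\eps) + O(1/m)$ for every $m$; taking $m\to\infty$ and then $\eps\to 0$ forces $\limsup_n a_n = \liminf_n a_n$.

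The main obstacle will be executing the output splitting cleanly in the subadditivity step. Because deletions render the boundary between $Z(X_1^n)$ and $Z(X_{n+1}^{n+m})$ a random, receiver-unknown quantity, getting from $I_{n+m}$ to $I_n + I_m + \mathrm{error}$ requires carefully combining the FSC's Markov state evolution with the symbol-wise independent deletion mechanism so that the second block's contribution factors cleanly through $S_{n-N(n)}$. The state-independence bound is essential here as well: without it, one could only bound $I(X_{n+1}^{n+m};Z(X_{n+1}^{n+m})|S_{n-N(n)})$ by $\sup_s I(X_1^m;Z(X_1^m)|s)$, which a priori could differ from $I_m$ by a non-vanishing amount.
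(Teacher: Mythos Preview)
Your proposal is sound and will go through, but it takes a different route from the paper and leaves two steps that deserve to be made explicit.

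For state-independence, the paper does not pass through Proposition~\ref{blockdifferencestationary} at all: it directly re-derives (and could simply cite) Proposition~\ref{blockdifferenceindependent}, which already bounds $|I(X_1^n;Z(X_1^n)|s_0)-I(X_1^n;Z(X_1^n)|s_0')|/n$ for \emph{arbitrary} input laws, with no stationarity hypothesis. Dividing by $n$, sending $n\to\infty$ and then $\eps\to 0$ finishes the job in one line. Your triangle-inequality route via two applications of Proposition~\ref{blockdifferencestationary} plus Lemma~\ref{indecomposable} also works, but is more circuitous: the ``shifted counterparts'' $I(X_{k+1}^{k+n};Z(X_{k+1}^{k+n}))$ still depend implicitly on $s_0$ through the law of $S_{k-N(k)}$, and showing they are close for $s_0$ versus $s_0'$ requires essentially the same total-variation-to-entropy step that Proposition~\ref{blockdifferenceindependent} already packages.

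For existence, the paper simply says the argument is ``similar to that of (ii)'' (the block-independent superadditivity in Corollary~\ref{existenceindependent}) and omits the details; your subadditivity scheme is a legitimate---and arguably more honest---alternative, since the superadditivity argument in (ii) leans on block independence. Two points to tighten. First, the Markov property you invoke does not by itself bound the ``second piece'': after the chain rule you are left with a term of the form $I(X_{n+1}^{n+m};Z(X_{n+1}^{n+m})\mid Z(X_1^n),S_{n-N(n)})$, and dropping the $Z(X_1^n)$ conditioning requires the concavity of mutual information in the input law (given $S_{n-N(n)}$ the channel on the second block is fixed, while $Z(X_1^n)$ affects only the conditional distribution of $X_{n+1}^{n+m}$). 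Second, because $X$ is merely stationary, $S_{n-N(n)}$ and $X_{n+1}^{n+m}$ are correlated through $X_1^n$, so your ``state-independence bound'' alone does not convert $I(X_{n+1}^{n+m};Z(X_{n+1}^{n+m})\mid S_{n-N(n)})$ into $I_m+c(\eps)m$; the clean fix is to remove the $S$-conditioning via Lemma~\ref{conditionallemma} (another $\log|\mathcal{S}|$) and then invoke Proposition~\ref{blockdifferencestationary} with shift $k=n$ to compare $I(X_{n+1}^{n+m};Z(X_{n+1}^{n+m}))$ to $I_m$. With those two ingredients in place, the Fekete iteration goes through exactly as you describe.
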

\begin{proof}
The proof can be found in Appendix~\ref{initialindependentproof}.
\end{proof}
The following theorem justifies that $C_{\mathrm{Shannon}}$ is well-defined and independent of the choice of the initial state $s_0$.
\begin{thm}\label{shannonindependent}
The Shannon capacity $C_{\mathrm{Shannon}}$ does not depend on the initial state of $W_2$ and the limit in the definition of $C_{\mathrm{Shannon}}$ exists.
\end{thm}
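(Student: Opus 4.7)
The plan is to prove the theorem in two parts. Writing $C_n(s_0) := \frac{1}{n}\sup_{p_{X_1^n}} I(X_1^n; Z(X_1^n)|s_0)$, I will establish (a) $|C_n(s_0) - C_n(s_0')| \to 0$ for any two initial states, and (b) $\lim_{n\to\infty} C_n(s_0)$ exists. For part (a), I note that the bound in Proposition~\ref{blockdifferenceindependent} is uniform in the input distribution, so taking the supremum over $p_{X_1^n}$ on both sides gives the same inequality for $|C_n(s_0) - C_n(s_0')|$. Fixing $\eps>0$ and $k = K(\eps)$, letting $n\to\infty$ and then $\eps\to 0$ proves (a).

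For part (b), Corollary~\ref{initialindependent} ensures that for any stationary input $X$, the rate $R(X) := \lim_n I(X_1^n;Z(X_1^n)|s_0)/n$ exists and does not depend on $s_0$. The strategy is to sandwich $\lim_n C_n(s_0)$ between $A := \sup_{X\text{ stationary}} R(X)$ from both sides, which forces the limit to exist and to equal the state-independent quantity $A$. The lower bound $\liminf_n C_n(s_0) \ge A$ is immediate: for any stationary $X$, $I(X_1^n;Z|s_0)/n \le C_n(s_0)$, so $R(X) \le \liminf_n C_n(s_0)$, and taking the supremum over $X$ gives the claim.

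For the reverse $\limsup_n C_n(s_0) \le A$, fix $\eps>0$ and $n$ sufficiently large. Choose $p_n^*$ on $\mathcal{X}^n$ with $I(X_1^n;Z|s_0)/n \ge C_n(s_0) - \eps$, and form the bilaterally block-independent process $X = (X_i)_{i\in\mathbb{Z}}$ with iid blocks of length $n$ drawn from $p_n^*$. Applying Lemma~\ref{sideinformation} at the breakpoints $n, 2n, \ldots, kn$ (which costs $k\log(n+1)$ bits) and Corollary~\ref{existenceindependent}(iv) gives
\[
I(X; Z(X)|s_0) \;=\; \lim_k \frac{I(X_1^{kn};Z(X_1^{kn})|s_0)}{kn} \;\ge\; \frac{I(X_1^n;Z|s_0)}{n} - \eps - \frac{\log(n+1)}{n}.
\]
Stationarize by setting $\tilde{X}_i := X_{T+i}$ with $T$ uniform on $\{0,1,\ldots,n-1\}$ and independent of $X$. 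Bounding $I(\tilde{X}_1^M;Z(\tilde{X}_1^M)|s_0) \ge I(\tilde{X}_1^M;Z|T,s_0) - H(T) = \tfrac{1}{n}\sum_{t=0}^{n-1} I(X_{t+1}^{t+M};Z(X_{t+1}^{t+M})|s_0) - \log n$ and invoking Corollary~\ref{existenceindependent}(iii) (shift-invariance of the per-symbol rate of $X$), I obtain $R(\tilde X) \ge I(X;Z(X)|s_0) \ge C_n(s_0) - 3\eps$ for $n$ large. Since $\tilde X$ is stationary, $A \ge R(\tilde X)$, so $\limsup_n C_n(s_0) \le A + 3\eps$; sending $\eps\to 0$ closes the sandwich.

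The main obstacle is controlling the two losses in the sandwich: replacing the block-partitioned outputs by the concatenated channel output costs $O(\log(n+1)/n)$ per symbol by Lemma~\ref{sideinformation}, and the random-shift stationarization costs $H(T)/M = \log n / M$ asymptotically, with Corollary~\ref{existenceindependent}(iii) guaranteeing that averaging over $T$ contributes no loss of rate. Both losses vanish as $n, M \to \infty$, so the stationarized process achieves rate arbitrarily close to $C_n(s_0)$, completing the proof.
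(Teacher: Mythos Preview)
Your argument for part (a) coincides with the paper's: both simply note that the bound in Proposition~\ref{blockdifferenceindependent} is uniform in the input law, pass to the supremum, and let $n\to\infty$ then $\eps\to 0$.

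For part (b) your approach is correct but genuinely different. The paper proves existence of $\lim_n C_n(s_0)$ by a \emph{direct super-additivity argument}: taking an input that is independent across the two blocks $[1,r]$ and $[r+1,r+t]$ with the respective optimizers, it shows $(r+t)C_{r+t}(s_0)\ge rC_r(s_0)+tC_t(s_0)-o(r+t)$ (using Lemma~\ref{sideinformation}, Lemma~\ref{conditionallemma}, and the indecomposability estimate), and then invokes Gallager's super-additive lemma to conclude that $\lim_n C_n(s_0)=\sup_n C_n(s_0)$ exists. This is short, self-contained, and yields the useful identity $C_{\mathrm{Shannon}}=\sup_n C_n(s_0)$. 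Your route instead sandwiches $\liminf_n C_n(s_0)$ and $\limsup_n C_n(s_0)$ by the stationary quantity $A=\sup_{X\ \mathrm{stationary}} R(X)$: the lower bound is trivial, and for the upper bound you build a block-i.i.d.\ process from a near-optimizer $p_n^*$, control the loss by Lemma~\ref{sideinformation} and Corollary~\ref{existenceindependent}(iv), and then stationarize via a uniform random shift, using Corollary~\ref{existenceindependent}(iii) to show no rate is lost in the averaging. This is exactly the Feinstein-type construction the paper uses later in Section~\ref{proof} to prove $C_{\mathrm S}\ge C_{\mathrm{Shannon}}$; you are effectively folding that later step into the present theorem. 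The payoff is that you get $C_{\mathrm{Shannon}}=A$ (hence state-independence and existence) in one stroke; the cost is a longer argument that leans on more of the preliminary machinery (Corollary~\ref{existenceindependent}(ii)--(iv)) than the paper's super-additivity proof does.
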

\begin{proof}
The proof can be found in Appendix~\ref{shannonindependentproof}.
\end{proof}
The following lemma is from~\cite{dubrushin67} and will be used in the proof of Theorem~\ref{capacity}. 
\begin{lemma}\label{dobrushinlemma}
Let $X$ and $Y$ be finite-valued discrete random variables taking values in $\mathcal{X}$ and $\mathcal{Y}$, respectively. Let $\mathcal{V}$ be a finite set and $\phi: \mathcal{Y}\to\mathcal{V}$ be a function. Let $g_{\phi}(v)=|\{y:\phi(y)=v\}|$. Then
\begin{align}\label{exp}
\EE[|\iota_{X,Y}(X,Y)-\iota_{X,\phi(Y)}(X,\phi(Y))|]\le \max_{v\in\mathcal{V}}\log g_{\phi}(v)
\end{align}
and
\begin{align}\label{prob}
\Pr(\iota_{X,Y}(X,Y)\not=\iota_{X,\phi(Y)}(X,\phi(Y)))\le \Pr(g_{\phi}(Y)\not=1).
\end{align}
\end{lemma}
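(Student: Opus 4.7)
The plan is to prove the two inequalities separately: the probability bound follows from a pointwise coincidence of the two information densities, while the expectation bound comes from rewriting the difference as a conditional information density and bounding it by conditional entropies.

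First I would establish (\ref{prob}), reading $g_{\phi}(Y)$ as shorthand for $g_{\phi}(\phi(Y))$ (otherwise the expression is not type-correct since $g_{\phi}$ is defined on $\mathcal{V}$). On the event $\{g_{\phi}(\phi(Y))=1\}$, the preimage $\phi^{-1}(\phi(Y))$ is the singleton $\{Y\}$, so the marginal probabilities agree, $p_Y(Y)=p_{\phi(Y)}(\phi(Y))$, and likewise the joint probabilities agree, $p_{X,Y}(X,Y)=p_{X,\phi(Y)}(X,\phi(Y))$. This forces $\iota_{X,Y}(X,Y)=\iota_{X,\phi(Y)}(X,\phi(Y))$ on that event, and contraposition gives $\{\iota_{X,Y}\neq \iota_{X,\phi(Y)}\}\subseteq\{g_{\phi}(\phi(Y))\neq 1\}$, which is the claim.

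For (\ref{exp}), I would first rewrite the difference as a single conditional information density on the fibre $\phi^{-1}(\phi(Y))$,
\[
\iota_{X,Y}(X,Y)-\iota_{X,\phi(Y)}(X,\phi(Y))=\log\frac{p_{Y|X,\phi(Y)}(Y|X,\phi(Y))}{p_{Y|\phi(Y)}(Y|\phi(Y))}.
\]
Since both conditional probabilities lie in $(0,1]$, their logarithms are nonpositive, so the triangle inequality yields
\[
\bigl|\iota_{X,Y}-\iota_{X,\phi(Y)}\bigr|\le -\log p_{Y|X,\phi(Y)}(Y|X,\phi(Y))-\log p_{Y|\phi(Y)}(Y|\phi(Y)).
\]
Taking expectations gives $H(Y|X,\phi(Y))+H(Y|\phi(Y))\le 2H(Y|\phi(Y))$, and since, conditioned on $\phi(Y)=v$, $Y$ takes at most $g_{\phi}(v)$ values, we have $H(Y|\phi(Y))\le\log\max_{v}g_{\phi}(v)$.

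The main obstacle is the sharp constant: the triangle-inequality route above only yields $2\log\max_{v}g_{\phi}(v)$, whereas the lemma asserts the tighter $\log\max_{v}g_{\phi}(v)$. Closing this factor-$2$ gap requires exploiting the mixture relation $p_{Y|\phi(Y)}(\cdot|v)=\sum_{x}p_{X|\phi(Y)}(x|v)\,p_{Y|X,\phi(Y)}(\cdot|x,v)$, which prevents the two negative logarithms from simultaneously being large on the fibre $\phi^{-1}(v)$. I would attempt to close the gap by writing $\EE[|\iota_{X,Y}-\iota_{X,\phi(Y)}|]=I(X;Y|\phi(Y))+2\,\EE[(\iota_{X,\phi(Y)}-\iota_{X,Y})_{+}]$, bounding the first term by $H(Y|\phi(Y))\le\log\max_{v}g_{\phi}(v)$, and controlling the second term via a fibrewise analysis of the positive part, following the original argument of \cite{dubrushin67}.
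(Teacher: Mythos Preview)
The paper does not prove this lemma; it merely cites it from Dobrushin~\cite{dubrushin67}, so there is no in-paper argument to compare against. Your proof of~(\ref{prob}) is correct and complete.

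For~(\ref{exp}), your triangle-inequality route yielding the bound $2\max_{v}\log g_{\phi}(v)$ is valid, and your suspicion that the sharp constant requires extra work is well placed --- but in fact the constant~$1$ in~(\ref{exp}) is \emph{false} as stated. Take $\mathcal{X}=\mathcal{Y}=\{0,1\}$, let $\phi$ be constant (so $g_{\phi}\equiv 2$), let $X$ be uniform, and let $Y$ be the output of a binary symmetric channel with crossover~$\alpha$. Then $\iota_{X,\phi(Y)}\equiv 0$, while
\[
\EE\bigl[\lvert\iota_{X,Y}\rvert\bigr]
=(1-\alpha)\log\bigl(2(1-\alpha)\bigr)+\alpha\log\frac{1}{2\alpha}
=(1-2\alpha)\log 2-\alpha\log\alpha+(1-\alpha)\log(1-\alpha),
\]
which exceeds $\log 2$ for small~$\alpha$ (about $1.03$ bits at $\alpha=0.01$, peaking near $1.05$ bits at $\alpha\approx 0.034$). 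Hence the factor-$2$ gap you identified cannot be closed in general: your decomposition $\EE[\lvert\iota_{X,Y}-\iota_{X,\phi(Y)}\rvert]=I(X;Y\mid\phi(Y))+2\,\EE[(\iota_{X,\phi(Y)}-\iota_{X,Y})_{+}]$ is correct, but the positive-part term is genuinely nonzero and cannot be absorbed. The proposed ``fibrewise analysis following~\cite{dubrushin67}'' is therefore chasing a statement that does not hold.

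This does not damage the paper's results. The only use of~(\ref{exp}) is at~(\ref{expdifference}), where the bound is divided by $nk$ and reduces to $h\bigl(\tfrac{1}{n+1}\bigr)\to 0$; an extra constant factor is immaterial there. Your bound $H(Y\mid X,\phi(Y))+H(Y\mid\phi(Y))\le 2\max_{v}\log g_{\phi}(v)$ is exactly what that application needs.
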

The following theorem says that for the indecomposable FSC $W_2$ with a stationary and ergodic input process $X$, the asymptotic equipartition property holds.  
\begin{thm}\label{smbfsc}
Let $X=\{X_n\}$ be a stationary and ergodic input process and $Y=\{Y_n\}$ be the output process obtained by passing $X$ through the indecomposable FSC $W_2$. Then 
\begin{align}
-\frac{\log p_{Y_1^n}(Y_1^n)}{n}\to H(Y)\quad \mbox{a.s.\ and in $L^1$}
\end{align}
and 
\begin{align}
-\frac{\log W_2^{n}(Y_1^n|X_1^n,s_0)}{n}\to H(Y|X)\quad \mbox{a.s.\ and in $L^1$}.
\end{align}
\end{thm}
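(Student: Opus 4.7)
The plan is to prove the SMB-type theorem for the indecomposable FSC $W_2$ in two stages: first establish the convergence under a jointly stationary and ergodic extension of the process $(X,S,Y)$, where Breiman's ergodic theorem applies directly, and then transfer the result to the case of the arbitrary but fixed initial state $s_0$ by exploiting the indecomposability property used throughout the paper.

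The first step is to construct a two-sided stationary ergodic extension $(\tilde X_n,\tilde S_n,\tilde Y_n)_{n\in\mathbb{Z}}$. Because $X$ is stationary and ergodic, it extends naturally to a two-sided process. Indecomposability of $W_2$ in the form of equation~(1) of the paper implies that for any stationary ergodic input, the conditional law $p(s_n\mid x_1^n,s_0)$ becomes asymptotically independent of $s_0$. Taking a weak limit along the shifts (equivalently, applying Krylov--Bogolyubov to the joint kernel on $\mathcal{X}\times\mathcal{S}$ driven by the stationary input) produces an invariant distribution on $(\tilde X_n,\tilde S_n)$, and one then propagates through $p(z,s\mid s',x)$ to obtain a stationary joint law on $(\tilde X_n,\tilde S_n,\tilde Y_n)$ whose $\tilde X$-marginal matches $X$. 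Ergodicity follows by combining ergodicity of $\tilde X$ with the mixing of the state chain guaranteed by indecomposability.

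The second step is to apply the Shannon--McMillan--Breiman theorem to this stationary ergodic extension. Directly,
\[
-\tfrac{1}{n}\log p_{\tilde Y_1^n}(\tilde Y_1^n)\;\longrightarrow\; H(\tilde Y),\qquad -\tfrac{1}{n}\log p_{\tilde X_1^n,\tilde Y_1^n}(\tilde X_1^n,\tilde Y_1^n)\;\longrightarrow\; H(\tilde X,\tilde Y)
\]
a.s.\ and in $L^1$. Subtracting the SMB limit for $\tilde X$ yields the conditional statement $-\tfrac{1}{n}\log p_{\tilde Y_1^n\mid \tilde X_1^n}(\tilde Y_1^n\mid \tilde X_1^n)\to H(\tilde Y\mid \tilde X)$. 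It remains to identify these limits with $H(Y)$ and $H(Y\mid X)$ as they are defined with respect to the $s_0$-initialized process, and to transfer the a.s.\ and $L^1$ statements to that process.

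The third step, which I expect to be the main technical obstacle, is this transfer. The idea is a standard ``burn-in'' splitting argument at a sub-linearly growing time $k=k(n)$ with $k\to\infty$ and $k/n\to 0$. Writing
\[
W_2^n(y_1^n\mid x_1^n,s_0)=\sum_{s_k\in\mathcal{S}} p(s_k\mid x_1^k,s_0)\,W_2^{n-k}(y_{k+1}^n\mid x_{k+1}^n,s_k)\,p(y_1^k\mid x_1^k,s_0,s_k),
\]
and comparing to the analogous expression with the stationary distribution over $s_0$, the indecomposability bound of equation~(1) together with Lemma~\ref{indecomposable} lets us replace $p(s_k\mid x_1^k,s_0)$ by the stationary $\pi(s_k)$ up to an additive error $\varepsilon$, while the trivial factors $p(y_1^k\mid\cdot)\ge |\mathcal{Z}|^{-k}$ and $\le 1$ contribute only $O(k)$ in log-scale. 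This yields
\[
\bigl|\log W_2^n(y_1^n\mid x_1^n,s_0)-\log p_{\tilde Y_1^n\mid \tilde X_1^n}(y_1^n\mid x_1^n)\bigr|= o(n)
\]
uniformly in $(x_1^n,y_1^n)$, and likewise for the unconditional probabilities. Dividing by $n$ and combining with the second step gives the claimed a.s.\ and $L^1$ convergence. The delicate point is that one cannot take multiplicative ratios of the probabilities (they can be exponentially small or zero), so one must work with additive bounds on the log and carry the $O(k)$ boundary term carefully; this is where the ``conditional'' form of indecomposability proved in Lemma~\ref{indecomposable}, rather than its raw statement, is what actually drives the argument.
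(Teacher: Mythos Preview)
Your route differs substantially from the paper's. The paper proves the theorem in three citations: Kieffer--Rahe~\cite{markovams} shows that $(X,Y)$ and $Y$ are asymptotically mean stationary (AMS); Gray--Dunham--Gobbi~\cite{grayergodic} shows they are ergodic; and Barron's generalized Shannon--McMillan--Breiman theorem for AMS ergodic sources~\cite{barron} then yields both convergences directly for the $s_0$-initialized process itself. No stationary extension is constructed and no transfer argument is used.

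Your Step~3 contains a genuine gap. The inequality $p(y_1^k\mid x_1^k,s_0,s_k)\ge|\mathcal{Z}|^{-k}$ is false in general---for a given $s_0$ many output strings $y_1^k$ can have probability zero---so the boundary contribution is not $O(k)$ in log-scale as you claim. More structurally, the indecomposability estimate in the paper controls only the \emph{marginal} $p(s_k\mid x_1^k,s_0)$, whereas your decomposition requires control of the \emph{joint} $W_2^k(y_1^k,s_k\mid x_1^k,s_0)$; the factor $p(y_1^k\mid x_1^k,s_0,s_k)$ still depends on $s_0$ and need not be bounded away from zero. Hence the uniform bound $|\log W_2^n-\log p_{\tilde Y_1^n\mid\tilde X_1^n}|=o(n)$ does not follow, and you also have not explained why almost-sure convergence under the stationary law should transfer to almost-sure convergence under the $s_0$-law, which is a different probability measure. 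These are exactly the obstacles the AMS machinery is designed to handle: the $s_0$-initialized process \emph{is} AMS with stationary mean equal to your extension, and Barron's theorem gives the SMB conclusion for it without any pointwise density comparison. If you wish to avoid citing~\cite{barron}, you would have to reproduce his sandwich/martingale argument rather than the burn-in estimate you sketch; likewise, your assertion that the extension is ergodic ``by combining ergodicity of $\tilde X$ with the mixing of the state chain'' is precisely the content of~\cite{grayergodic} and is not obvious from indecomposability alone.
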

\begin{proof}
The proof can be found in Appendix~\ref{smbfscproof}.
\end{proof}

\section{Proof of Theorem~\ref{capacity} and Corollary~\ref{polar}}\label{proof}

\subsection{Proof of Theorem~\ref{capacity} }\begin{proof}
As $C_{\mathrm{S}}\le C_{\mathrm{Shannon}}$ and $\lim_{m\to\infty}C_{\mathrm{Markov}}^{(m)}\le C_\mathrm{S}$ can be proved easily from the definitions of $C_{\mathrm{S}}$, $C_{\mathrm{Shannon}}$, and $C_{\mathrm{Markov}}^{(m)}$, to prove Theorem~\ref{capacity} it suffices to show $C_{\mathrm{Shannon}}\le C$, $C\le C_{\mathrm{Shannon}}$, $C_{\mathrm{Shannon}}\le C_\mathrm{S}$ and $C_\mathrm{S}\le \lim_{m\to\infty}C_{\mathrm{Markov}}^{(m)}$.

{\bf Proof of $C_{\mathrm{Shannon}}\le C$.} We will adapt Dobrushin's approach~\cite{dubrushin67} to prove the achievability of $C_{\mathrm{Shannon}}.$ Let $\{p_n(\cdot)\}$ be a sequence of probability mass functions and let $\{X_1^n\}$ be a sequence of random vectors in which each $X_1^n$ is distributed according to $p_{n}(\cdot)$ and such that
\begin{align}\label{approachShannon}
\frac{I(X_1^n;Z(X_1^n))}{n}\to C_{\mathrm{Shannon}} \quad \mbox{as $n\to\infty$}.
\end{align} 

For a fixed integer $n$, let $X=\{X_i\}$ be a block independent process with $p_{{X}_{1}^{kn}}(x_1^{kn})=\prod_{i=1}^{k}p_{n}(x_{(i-1)n+1}^{in})$ for any $k$ and $x_1^{kn}\in\mathcal{X}^{kn}$. For notational convenience let $\xi_{i}=X_{(i-1)n+1}^{in}$ and $\xi=\{\xi_i\}$. Then $\xi$ is an i.i.d.\ random process with $p_{\xi_i}(\cdot)=p_{X_1^n}(\cdot)$. Let $\eta_i=Z(X_{(i-1)n+1}^{in})$ be the output obtained when passing $\xi_i=X_{(i-1)n+1}^{in}$ through the channel $W$. Let $\xi_{n,k}=(\xi_1,\cdots,\xi_k)$, ${\eta}_{n,k}=(\eta_1\cdots \eta_{k})$, and $\hat{\eta}_{n,k}=(\eta_1,\cdots,\eta_k)$. Intuitively $\hat{\eta}_{n,k}$ is the output of $W$ with the side information about the number of deletions that occurred during each transmission of $\xi_i$. Let $\hat{\eta}=({\eta}_{1},\eta_2,\cdots)$. We first show that for any $\eps>0$ and for any fixed integer $n$, we have 
\begin{align}\label{auxilary1}
\limsup_{k\to\infty}\Pr\left(\left|\frac{\iota_{\xi_{n,k},\hat{\eta}_{n,k}}}{k}-I(\xi;\hat{\eta})\right|\ge n\eps\right)=0,
\end{align}
where $I(\xi;\hat{\eta})=\lim_{k\to\infty}\frac{I(\xi_{n,k};\hat{\eta}_{n,k})}{k}$.
Using similar ideas as in~\cite{dubrushin67}, we then show that there exists an increasing sequence of integers $\{k_n\}$ such that
\begin{align}\label{auxilary2}
\lim_{n\to\infty}\Pr\left(\left|\frac{\iota_{\xi_{n,k_n},{\eta}_{n,k_n}}}{k_n}-nC_{\mathrm{Shannon}}\right|\ge n\eps\right)=0.
\end{align}
It then follows from~\cite[Theorem 1, pp.~340]{dubrushin59} that $C_{\mathrm{Shannon}}$ is achievable and hence $$ C_{\mathrm{Shannon}}\le C.$$ 

Now we prove~(\ref{auxilary1}). We now define a new channel $\hat{W}$ formed from the concatenation of ${W}_1$ and the FSC ${W}_2$ with the side information about the number of deletions.  Let $\emptyset$ be the empty vector. Let $\zeta_1^{k}=(\zeta_1,\cdots,\zeta_k)$, $s_1^{*k}=(s_1^*,\cdots,s_k^*)$, $z_1^{*k}=(z_1^*,\cdots,z_k^*)$ be respectively $k$-dimensional vectors such that $\zeta_{i}\in\mathcal{X}^n$, $s_i^*\in \mathcal{Z}^{*n}-\{\emptyset\}$, and $z_i^*\in \mathcal{Z}^{*n}$ for $1\le i\le k$. The $j^{\mathrm{th}}$-element of $x_i^*$, $s_i^*$, and $z_i^*$ are denoted by $x_{i,j}$, $s_{i,j}$, and $z_{i,j}$, respectively. Then the probability of obtaining outputs $z_1^{*k}$ and states $s_1^{*k}$ by passing $\zeta_1^{k}$ through the channel $\hat{W}^k$ with the initial state $s_0^*$ is given as 
$$
\hat{W}^{k}(z_1^{*k},s_1^{*k}|\zeta_1^{k},s_0^*)=\prod_{i=1}^{k}p(z^{*}_{i},s_{i}^{*}|\zeta_{i},s_{i-1}^*),
$$
where  
$$
p(z^{*}_{i},s_{i}^{*}|\zeta_{i},s_{i-1}^*)=\begin{cases}\sum_{x^{*}\in\mathcal{X}^{*n}:\ell(x^{*})=
\ell(z_i^*)}W_1^n(x^{*}|\zeta_{i})W_2^{\ell(z_i^{*})}(z_i^{*},s_{i}^{*}|x^{*},s_{i-1,\ell(s_{i-1}^{*})}^{})&\mbox{$\ell(z^*_{i})=\ell(s^*_{i})$}\\
0&\mbox{$\ell(z^*_{i})\not=\ell(s^*_{i})$}.
\end{cases}$$
Intuitively, $\hat{W}^k$ is the channel $W^{nk}$ with the side information about the number of deletions in each transmission of $\zeta_i$. Let $\hat{W}=\{\hat{W}^k\}_{k=1}^{\infty}$ and the initial state of $\hat{W}^k$ is $ s_0^*=s_0$. Then $\hat{W}$ is an indecomposable FSC. (The proof that $\hat{W}$ is indecomposable can be found in Appendix~\ref{indecomposablenew}.) Let $\xi$ be passed through the channel $\hat{W}$. Then the output is $\hat{\eta}$. As $\xi$ is an i.i.d.\ process and $\hat{W}$ is an indecomposable FSC, from Theorem~\ref{smbfsc} we have that for any fixed $n$,
\begin{align}
&\lim_{k\to\infty}-\frac{1}{k}\log p_{\hat{\eta}_{n,k}}(\hat{\eta}_{n,k})=H(\hat{\eta})\quad \mbox{a.s. and in $L^1$}\label{im8}
\end{align}
and
\begin{align}
&\lim_{k\to\infty}-\frac{1}{k}\log \hat{W}^{k}(\hat{\eta}_1^k|\xi_1^k)=H(\hat{\eta}|\xi)\quad\mbox{a.s. and in $L^1$}\label{im9},
\end{align}
respectively. Combining~(\ref{im8}) and~(\ref{im9}), we have~(\ref{auxilary1}), as deisred.
%(\textcolor{red} {needed to be polished})

%As the receiver has side information about the length of the output sequence $Z(X^n(i))$, 

Now we prove~(\ref{auxilary2}). Let $\Omega_{k,n}=\{({z}_1^{*},{z}_2^{*},\cdots,{z}_k^{*}):z^{*}_i\in\mathcal{Z}^{*n} \}$ and let $\phi: \Omega_{k,n}\to \mathcal{Z}^{*kn}$ be the function that maps $({z}_1^{*},{z}_2^{*},\cdots,{z}_k^{*})$ to the concatenation ${z}_1^{*}{z}_2^{*}\cdots{z}_k^{*}$. For any ${z}^{*kn}\in \mathcal{Z}^{*kn}$, let $g_{\phi}(\bar{y}^{*kn})$ be the number of vectors $({z}_1^{*},{z}_2^{*},\cdots,{z}_k^{*})\in \Omega_{k,n}$ such that $\phi({z}_1^{*},{z}_2^{*},\cdots,{z}_k^{*})={z}^{*kn}$.  One easily checks that 
$$
g_{\phi}(z^{*kn})\le {\ell(z^{*kn})+k\choose k}\le {nk+k\choose k}.
$$
As $\phi(\hat{\eta}_{n,k})=\eta_{n,k}$, then from Lemma~\ref{dobrushinlemma} it follows that
\begin{align}\label{expdifference}
\EE[|\iota_{\xi_{n,k},\eta_{n,k}}-\iota_{\xi_{n,k},\hat{\eta}_{n,k}}|]&\le {\log {nk+k\choose k}}\stackrel{(a)}{\le} (n+1)k h\left(\frac{1}{n+1}\right),
\end{align}
where in~$(a)$ we have used the fact that ${(n+1)k\choose k}\le 2^{(n+1)k h\left(\frac{1}{n+1}\right)}$ and $h(x)=-x\log x-(1-x)\log (1-x)$ is the binary entropy function. Using Markov's inequality, we have that
\begin{align}\label{smboriginal}
%&\Pr(|\iota_{\xi_{n,k},\eta_{n,k}}-kn C_{\mathrm{Shannon}}|\ge nk\eps)\notag\\
& \Pr(|\iota_{\xi_{n,k},\eta_{n,k}}-kI(X_1^n;Z(X_1^n))|\ge nk\eps/2)\notag\\
%&\le \Pr(|kI(X_1^n;Z(X_1^n))-kn C_{\mathrm{Shannon}}|\ge nk\eps/2)\notag\\
& \le \Pr(|\iota_{\xi_{n,k},\eta_{n,k}}-\iota_{\xi_{n,k},\hat{\eta}_{n,k}}|\ge nk\eps/4)+\Pr(|\iota_{\xi_{n,k},\hat{\eta}_{n,k}}-kI(X_1^n;Z(X_1^n))|\ge nk\eps/4)\notag\\
&\le \frac{4h(\frac{1}{n+1})}{\eps}+\Pr(|\iota_{\xi_{n,k},\hat{\eta}_{n,k}}-kI(X_1^n;Z(X_1^n))|\ge nk\eps/4)\notag\\
&\le \frac{4h(\frac{1}{n+1})}{\eps}+ \Pr((|\iota_{\xi_{n,k},\hat{\eta}_{n,k}}-kI(\xi;\hat{\eta}))|\ge nk\eps/8)\notag\\
&\hspace{6.5cm}+\Pr(|kI(\xi;\hat{\eta})-kI(X_1^n;Z(X_1^n))|\ge nk\eps/8).
\end{align}
From~(\ref{auxilary1}) it follows that for any fixed integer $n$,
\begin{align}\label{blockchannelsmb}
\lim_{k\to\infty}\Pr\left(\left|\frac{\iota_{\xi_{n,k},\hat{\eta}_{n,k}}}{k}-I(\xi;\hat{\eta})\right|\ge \frac{\eps}{8} \right)=0.
\end{align}
%Note that
%\begin{align}
%I(\xi;\hat{\eta})&=\lim_{k\to\infty}\frac{1}{k}I(X_1^{kn};Z(X_1^n),\cdots,Z(X_{(k-1)n+1}^{kn}))\notag\\
%&=\lim_{k\to\infty}\frac{1}{k}\sum_{i=1}^{k}I(X_{(i-1)n+1}^{in};Z(X_1^n),\cdots,Z(X_{(k-1)n+1}^{kn})|X_1^{(i-1)n})\notag\\
%&=\lim_{k\to\infty}\frac{1}{k}\sum_{i=1}^{k}I(X_{(i-1)n+1}^{in};Z(X_1^n),\cdots,Z(X_{(k-1)n+1}^{kn})|X_1^{(i-1)n})
%\end{align}
From part (iv) in Corollary~\ref{existenceindependent}, we have that for $\eps>0$, there exists an integer $N$ such that when $n\ge N$, 
{\small\begin{align}\label{im111}
\left|I(\xi;\hat{\eta})-\frac{I(X_1^n;Z(X_1^n))}{n}\right|\le \frac{\eps}{10}.
\end{align}}
Combining~(\ref{smboriginal}),~(\ref{blockchannelsmb}) and~(\ref{im111}), we have that for $n\ge N$,~(\ref{auxilary1}) holds.
As $p_{n}(\cdot)$ is chosen such that~(\ref{approachShannon}) holds, we conclude that there exists an increasing sequence of integers $\{k_n\}$ such that~(\ref{auxilary2}) holds, as desired.
%\begin{align}
%\lim_{n\to\infty}\Pr(|\iota_{\xi_{n,k_n},\eta_{n,k_n}}-k_nC_{\mathrm{Shannon}}|\ge nk_n\eps)=0,
%\end{align}
%and
%\begin{align}
%\lim_{n\to\infty}P\left(\left|\frac{1}{k_n}\log \frac{W({\eta}_{n,k_n}|\xi_{n,k_n})}{\Pr({\eta}_{n,k_n})}-nC_{\mathrm{Shannon}}\right|\ge n\eps\right)=0,
%\end{align}

{\bf Proof of $C\le C_{\mathrm{Shannon}}$.} This inequality can be derived by going through the usual converse part using Fano's inequality; for details see~\cite[Section 7.9]{Cover}.

{\bf Proof of $C_{\mathrm{S}}\ge C_{\mathrm{Shannon}}$.} Let $\eps>0$ and let the probability mass function $p_{X_1^n}(\cdot)$ be such that 
$$
\frac{I(X_1^n;Z(X_1^n)|s_0)}{n}\ge \sup_{p_{X_1^n}(\cdot)}\frac{I(X_1^n;Z(X_1^n)|s_0)}{n}-\eps,
$$
where $Z(X_1^n)$ is the output of $W^n$ obtained by passing the input $X_1^n$ through the channel $W$. Using similar ideas as in~\cite{Feinstein}, we construct the stationary and ergodic input process $X$ as follows:
\begin{itemize}
\item[(i)] Construct the block independent process $\hat{X}$ with $p_{\hat{X}_{1}^{kn}}(x_1^{kn})=\prod_{i=1}^{k}p_{X_1^n}(x_{(i-1)n+1}^{in})$ for any $k\ge 1$ and $x_1^{kn}$.
\item[(ii)] Let $V$ be a random variable that is uniformly distributed over $\{0,1,\cdots, n-1\}$. Let $\bar{X}_{k}=\hat{X}_{k+V}$ for any $k\ge 1$. Then one can verify that $\bar{X}=\{\bar{X}_k\}$ is a stationary and ergodic process. 
\end{itemize}
As $\bar{X}$ is stationary, it follows from Corollary~\ref{initialindependent} that for any initial state $s_0$, 
$$\lim_{m\to\infty}\frac{1}{m}I(\bar{X}_1^{m};Z(\bar{X}_1^m)|s_0)\quad
\mbox{exists}.$$  
Here, we note that $\bar{X}_1^m$ refers to the first $m$ random variables in the process $\bar{X}$. Suppose
\begin{align}\label{stationaryshannon}
\lim_{m\to\infty}\frac{1}{m}I(\bar{X}_1^{m};Z(\bar{X}_1^m)|s_0)\ge \frac{I(X_1^n;Z(X_1^n)|s_0)}{n}-2\eps,
\end{align}
then we have that
$$
C_\mathrm{S}\ge \lim_{m\to\infty}\frac{1}{m}I(\bar{X}_1^{m};Z(\bar{X}_1^m)|s_0)\ge C_{\mathrm{Shannon}}-3\eps.
$$
Due to the arbitrariness of $\eps$, we have $C_\mathrm{S}\ge C_{\mathrm{Shannon}}$. Hence to complete the proof, it suffices to show~(\ref{stationaryshannon}). For $0\le j\le n-1$ and $i\ge 1$, let $\hat{X}^{(j)}_i\triangleq\hat{X}_{j+i}$ and $\hat{X}^{(j)}=\{\hat{X}^{(j)}_i\}_{i=1}^{\infty}$. Let $Z(\hat{X}^{(j)})$ be the output obtained by passing $\hat{X}^{(j)}$ through the channel $W$. Then
from~\cite[Lemma 3]{Feinstein} and part (iii) in Corollary~\ref{existenceindependent} it follows that 
$$
\lim_{m\to\infty}\frac{1}{m}I(\bar{X}_1^{m};Z(\bar{X}_1^m)|s_0)=\sum_{i=0}^{n-1}\frac{1}{n}I(\hat{X}^{(j)};Z(\hat{X}^{(j)})|s_0)=I(\hat{X};Z(\hat{X})|s_0).
$$
Thus to complete the proof it suffices to show that
\begin{align}\label{blockshannon}
I(\hat{X};Z(\hat{X})|s_0)=\lim_{m\to\infty}\frac{1}{m}I(\hat{X}_1^{m};Z(\hat{X}_1^m)|s_0)\ge \frac{I(X_1^n;Z(X_1^n)|s_0)}{n}-2\eps.
\end{align}
From Lemma~\ref{sideinformation} it follows that 
{\small\begin{align}
I(\hat{X}_1^{kn};Z(\hat{X}_1^{kn})|s_0)
&\ge I(\hat{X}_1^{kn};Z(\hat{X}_1^{n}),\cdots,Z(\hat{X}_{(k-1)n+1}^{kn}))-k\log (n+1).
\end{align}}By part (iv) in Corollary~\ref{existenceindependent}, we obtain that there exists an integer $N(|\mathcal{X}|,|\mathcal{Z}|,|\mathcal{S}|)$ such that for $n\ge N(|\mathcal{X}|,|\mathcal{Z}|,|\mathcal{S}|)$,
\begin{align}
I(\hat{X};Z(\hat{X})|s_0)&\ge \lim_{k\to\infty}\frac{1}{kn}I(\hat{X}_1^{kn};Z(\hat{X}_1^{n}),\cdots,Z(\hat{X}_{(k-1)n+1}^{kn})|s_0)-\frac{\log (n+1)}{n}\notag\\
&\ge I(X_1^n;Z(X_1^{n})|s_0)-\frac{\log (n+1)}{n}-\eps.
\end{align}
Choosing $n\ge N(|\mathcal{X}|,|\mathcal{Z}|,|\mathcal{S}|)$ such that $\frac{\log (n+1)}{n}\le \eps$, we then have~(\ref{blockshannon})
as desired.

% \eps|\mathcal{S}|(\log|\mathcal{X}|+2\log|\mathcal{Z}|)+\frac{2(t(\log|\mathcal{X}|+\log|\mathcal{Z}|)+\log(t+1)-2\eps|\mathcal{S}|\log(|\mathcal{S}|\eps))}{n}$

{\bf Proof of $C_\mathrm{S}\le \lim_{m\to\infty}C_{\mathrm{Markov}}^{(m)}$.} The idea of the proof is similar to that in~\cite{chensiegel}. Let $\eps>0$ and let $X$ be a stationary and ergodic process such that $I(X;Z(X))\ge C_\mathrm{S}-\eps$.  Let $\delta>0$ be such that 
{\small\begin{align}\label{choice1}
\delta|\mathcal{S}|(\log|\mathcal{X}|+2\log|\mathcal{Z}|)\le \frac{\eps}{2}.
\end{align}}
Let $t$ and $n$ be such that 
{\small\begin{align}\label{choice2}
\frac{\log (n+1)}{n}\le \eps,\quad \frac{H(X_1^n|Z(X_1^n))}{n}\le H(X|Z(X))+\eps,
\end{align}}
and
{\small\begin{align}\label{choice3}
 \quad \frac{2t(\log|\mathcal{X}|+\log|\mathcal{Z}|)+\log(t+1)-4\delta|\mathcal{S}|\log(|\mathcal{S}|\delta)}{n}\le \frac{\eps}{2}.
\end{align} }Let $\hat{X}$ be an $(n-1)^\mathrm{st}$-order Markov process such that $p_{\hat{X}_1^n}(\cdot)=p_{X_1^n}(\cdot)$. We show that
\begin{align}\label{Markov}
I(\hat{X};Z(\hat{X}))\ge I(X;Z(X))-3\eps.
\end{align}
From~(\ref{Markov}) it follows that $C_{\mathrm{Markov}}^{(n-1)}\ge C_\mathrm{S}-4\eps$, which, together with the arbitrariness of $\eps$ and the fact that $C_{\mathrm{Markov}}^{(n-1)}\le \lim_{m\to \infty}C_{\mathrm{Markov}}^{(m)}$, further implies that 
$$
C_\mathrm{S}\le \lim_{m\to \infty}C_{\mathrm{Markov}}^{(m)}.
$$
Thus to complete the proof it suffices to show~(\ref{Markov}).
As $\hat{X}$ is an $(n-1)^\mathrm{st}$-order Markov process, we have that $H(\hat{X})=H(\hat{X}_0|\hat{X}_{-n+1}^{-1})\ge H(X)$. Hence to show~(\ref{Markov}), we only need to prove that
{\small\begin{align}\label{conditionalMarkov}
H(\hat{X}|Z(\hat{X}))=\lim_{k\to\infty} \frac{H(\hat{X}_1^{kn}|Z(\hat{X}_1^{kn}))}{kn}\le \lim_{k\to\infty}\frac{H(X_1^{kn}|Z({X}_1^{kn}))}{kn}+3\eps=H(X|Z(X))+3\eps.
\end{align}}
From Proposition~\ref{blockdifferencestationary}, we have that
\begin{align}\label{im200}
&\frac{|I(\hat{X}_{(i-1)n+1}^{in};Z(\hat{X}_{(i-1)n+1}^{in}))-I(\hat{X}_{1}^{n};Z(\hat{X}_1^{n})|s_0)|}{n}\notag\\
&\le \delta|\mathcal{S}|(\log|\mathcal{X}|+2\log|\mathcal{Z}|)+\frac{2(t(\log|\mathcal{X}|+\log|\mathcal{Z}|)+\log(t+1)-2\delta|\mathcal{S}|\log(|\mathcal{S}|\delta))}{n}\notag\\
&\le \eps.
\end{align}
where~(\ref{im200}) follows from~(\ref{choice1}) and~(\ref{choice3}). 
Then it follows that
\begin{align}\label{markovcond}
H(\hat{X}|Z(\hat{X}))&=\lim_{k\to\infty} \frac{H(\hat{X}_1^{kn}|Z(\hat{X}_1^{kn}))}{kn}\notag\\
&\stackrel{(a)}{=} \lim_{k\to\infty}  \frac{H(\hat{X}_1^{kn}|Z(\hat{X}_1^{n}),\cdots,Z(\hat{X}_{(k-1)n+1}^{kn}))}{kn}+\frac{\log(n+1)}{n}\notag\\
&\stackrel{(b)}{=}\lim_{k\to\infty}  \frac{\sum_{i=1}^{k}H(\hat{X}_{(i-1)n+1}^{in}|\hat{X}_{1}^{(i-1)n},Z(\hat{X}_1^{n}),\cdots,Z(\hat{X}_{(k-1)n+1}^{kn}))}{kn}+\eps\notag\\
&\stackrel{(c)}{\le }\lim_{k\to\infty}  \frac{\sum_{i=1}^{k}H(\hat{X}_{(i-1)n+1}^{in}|Z(\hat{X}_{(i-1)n+1}^{in}))}{kn}+\eps\notag\\
&\stackrel{}{= }\lim_{k\to\infty}  \frac{\sum_{i=1}^{k}(H(\hat{X}_{(i-1)n+1}^{in})-I(\hat{X}_{(i-1)n+1}^{in};Z(\hat{X}_{(i-1)n+1}^{in})))}{kn}+\eps\notag\\
&\stackrel{(d)}{= }\lim_{k\to\infty}  \frac{\sum_{i=1}^{k}(H(\hat{X}_{1}^{n})-I(\hat{X}_{1}^{n};Z(\hat{X}_1^{n})|s_0)+n\eps)}{kn}+\eps\notag\\
&\stackrel{(e)}{=}\frac{H({X}_{1}^{n}|Z({X}_1^{n}))}{n}+2\eps\notag\\
&\le H(X|Z(X))+3\eps,
\end{align}
where~$(a)$ follows from the fact that $$
\frac{H(\hat{X}_1^{kn}|Z(\hat{X}_1^{kn}))}{kn}\le \frac{H(\hat{X}_1^{kn}|Z(\hat{X}_1^{n}),\cdots,Z(\hat{X}_{(k-1)n+1}^{kn}))}{kn}+\frac{\log(n+1)}{n},
$$~$(b)$ follows from~(\ref{choice2}),~$(c)$ follows from that conditioning does not increase entropy,~$(d)$ follows from the stationarity of $X$ and~(\ref{im200}),~$(e)$ follows from the fact that $\hat{X}_{1}^{n}$ and ${X}_{1}^{n}$ have the same distribution.
\end{proof}

\subsection{Proof of Corollary~\ref{polar}}
\begin{proof}
In Theorem~\ref{capacity}, we showed that given any $\eps>0$, there is an $m^{\mathrm{th}}$-order Markov process $\bar{X}$ such that $I(\bar{X};Z(\bar{X}))\ge C-{\eps}/{4}$. Suppose there exists an $m^{\mathrm{th}}$-order irreducible and ergodic Markov process $X$ be such that $I(X;Z(X))\ge I(\bar{X};Z(\bar{X}))-\eps/4$. Let $R_{k}$ be the rate of the coding scheme in~\cite{idodeletion} associated with the input process ${X}$. Then from Theorem~\ref{idopolar}, Tal~et.al obtained that there exists an integer $M(X,\eps)$ such that $R_k\ge I({X};Z({X}))-{\eps}/{2}$ for $k\ge M(X,\eps)$.  Hence we have that for $k\ge M(X,\eps)$,
$$
R_k\ge I({X};Z({X}))-\frac{\eps}{2}\ge I(\bar{X};Z(\bar{X}))-\frac{\eps}{4}-\frac{\eps}{2}\ge C-\eps.
$$
To complete the proof, it suffices to show the existence of an $m^{\mathrm{th}}$-order irreducible and ergodic Markov process ${X}$ satisfying $I({X};Z(X))\ge I(\bar{X};Z(\bar{X}))-{\eps}/{4}$.

 Let $n$ be an integer such that 
 $$
\frac{H(\bar{X}_1^n|Z(\bar{X}_1^n))}{n}\ge H(\bar{X}|Z(\bar{X}))-\frac{\eps}{100}\mbox{,}\quad \frac{\log (n+1)}{n}\le \frac{\eps}{100},
$$
and let $\delta>0$
\begin{align}\label{delta}
3\delta\log|\mathcal{X}|+\frac{2(2t\log|\mathcal{X}|+\log(t+1)-2\delta\log\delta)}{n}\le\frac{\eps}{100}.
\end{align}

We now define the desired $m^{\mathrm{th}}$-order Markov process ${X}$ by 
$$
p_{{X}_1^{m+1}}(x_1^{m+1})=\begin{cases}p_{\bar{X}_{1}^{m+1}}(x_1^{m+1})-\delta_1&p_{\bar{X}_{m+1}|\bar{X}_{1}^{m}}(x_{m+1}|x_1^{m})>0,\\
\delta_2&\mbox{otherwise},
\end{cases}
$$ where $\delta_1$ and $\delta_2$ are chosen as follows. As $\bar{X}$ is an $m^{\mathrm{th}}$-order Markov process, $H(\bar{X}_1^n|Z(\bar{X}_1^n))$ is continuous in $p_{{X}_1^{m+1}}(\cdot)$. Together with the continuity of $H(X_{m+1}|X_1^{m+1})$ in $p_{{X}_1^{m+1}}(\cdot)$, we conclude that there exists $\delta_1$ and $\delta_2$ such that 
\begin{itemize}
\item[(i)] $p_{{X}_1^{m+1}}(x_1^{m+1})>0$ \mbox{for all $x_1^{m+1}$};
\item[(ii)] $\sum_{x_1^{m+1}} p_{{X}_1^{m+1}}(x_1^{m+1})=1$;
\item[(iii)] $|H(X_{m+1}|X_1^{m+1})-H(\bar{X}_{m+1}|\bar{X}_1^{m+1})|\le \frac{\eps}{100}$;
\item[(iv)] $\frac{|H({X}_1^n|Z({X}_1^n))-H(\bar{X}_1^n|Z(\bar{X}_1^n))|}{n}\le \frac{\eps}{100}$.
\end{itemize}
 As $p_{{X}_1^{m+1}}(x_1^{m+1})>0$ for all $x_1^{m+1}$, $X$ is irreducible and aperiodic. From Proposition~\ref{blockdifferencestationary} it follows that for $\delta$ in~(\ref{delta})
 and $n\ge t\ge K(\delta)$,
{\small\begin{align}\label{im100}
\left|\frac{I({X}_{(i-1)n+1}^{in};Z({X}_{(i-1)n+1}^{in}))}{n}-\frac{I({X}_{1}^{n};Z({X}_{1}^{n}))}{n}\right|\le 3\delta\log|\mathcal{X}|+\frac{2(2t\log|\mathcal{X}|+\log(t+1)-2\delta\log\delta)}{n}.
\end{align}}

Then similar to the derivation of~(\ref{markovcond}), we obtain that
{\small\begin{align}
I(X;Z(X))&=\lim_{k\to\infty}\frac{I(X_1^{kn};Z(X_1^{kn}))}{kn} \notag\\
&\stackrel{(a)}{\ge} H(X)-\lim_{k\to\infty}\frac{H({X}_1^{kn}|Z({X}_1^{n}),\cdots,Z({X}_{(k-1)n+1}^{kn}))}{kn}-\frac{\log(n+1)}{n}\notag\\
&\stackrel{}{\ge} H(X)-\lim_{k\to\infty}\frac{\sum_{i=1}^{k}H({X}_{(i-1)n+1}^{in}|Z({X}_1^{n}),\cdots,Z({X}_{(k-1)n+1}^{kn}),{X}^{(i-1)n}_{1})}{kn}-\frac{\eps}{100}\notag\\
&\stackrel{(b)}{\ge} H(X)-\lim_{k\to\infty}\frac{\sum_{i=1}^{k}H({X}_{(i-1)n+1}^{in}|Z({X}_{(i-1)n+1}^{in}))}{kn}-\frac{\eps}{100}\notag\\
&\stackrel{}{=} H(X)-\lim_{k\to\infty}\frac{\sum_{i=1}^{k}H({X}_{(i-1)n+1}^{in})-I({X}_{(i-1)n+1}^{in};Z({X}_{(i-1)n+1}^{in}))}{kn}-\frac{\eps}{100}\notag\\
&\stackrel{(c)}{=} H(X)-\frac{H({X}_{1}^{n})}{n}+\lim_{k\to\infty}\frac{\sum_{i=1}^{k}I({X}_{(i-1)n+1}^{in};Z({X}_{(i-1)n+1}^{in}))}{kn}-\frac{\eps}{100}\notag\\
&\stackrel{(d)}{\ge} H(X)-\frac{H({X}_{1}^{n})}{n}+\lim_{k\to\infty}\frac{\sum_{i=1}^{k}I({X}_{1}^{n};Z({X}_{1}^{n}))}{kn}-\frac{2\eps}{100}\notag\\
&= H(X)-\frac{H({X}_{1}^{n}|Z({X}_1^{n}))}{n}-\frac{2\eps}{100}\notag\\
&\stackrel{(e)}{\ge} H(\bar{X})- \frac{H({{X}}_{1}^{n}|Z({{X}}_1^{n}))}{n}-\frac{3\eps}{100}\notag\\
&\stackrel{(f)}{\ge} H(\bar{X})- \frac{H({\bar{X}}_{1}^{n}|Z({\bar{X}}_1^{n}))}{n}-\frac{4\eps}{100}\notag\\
&\stackrel{}{\ge} H(\bar{X})- H({\bar{X}}|Z({\bar{X}}))-\frac{5\eps}{100}\notag\\
&\stackrel{}{\ge} I(\bar{X};Z(\bar{X}))-\frac{\eps}{4},\notag
\end{align}}
where~$(a)$ follows from the fact that 
{\small$$
H({X}_1^{kn}|Z({X}_1^{kn}))\ge H({X}_1^{kn}|Z({X}_1^{n}),\cdots,Z({X}_{(k-1)n+1}^{kn}))-k\log(n+1),$$}$(b)$ follows from that conditioning does not increase entropy,
~$(c)$ follows from the stationarity of $X$,
~$(d)$ follows from~(\ref{im100}) and the choice of $n$ and $\delta$,
~$(e)$ follows from (iii) in the definition of $p_{{X}_1^{m+1}}(x_1^{m+1})$,
~$(f)$ follows from (iv) in the definition of $p_{{X}_1^{m+1}}(x_1^{m+1})$.
\end{proof}

\section*{Appendices} \appendix
\section{Proof of Lemma~\ref{indecomposable}}\label{indecomposableproof}
\begin{proof}
As the FSC $W_2$ is indecomposable, there exists an integer $K_1$ such that if $m\ge K_1$
\begin{align}\label{fscindecomposable}
\max_{s\in\mathcal{S}}|\Pr(S_m=s|X_1^m=x_1^m,S_0=s_0)-\Pr(S_m=s|X_1^m=x_1^m,S_0=s_0^\prime)|\le \frac{\eps}{2},
\end{align} 
for any $s_0$, $s_0^\prime$, and $y_1^m$.
\end{proof}
Note that $N(k)$ is a binomial random variable with $k$ trials and probability of success $d$. Therefore there exists an integer $K_2$ such that $\Pr(N(k)\le K_1)\le \eps/4$ for all $k\ge  K_2$. 
 Then for $k\ge  K_2$, we have that
\begin{align}
&\Pr(S_{k-N(k)}=s|X_1^k=x_1^k,S_0=s_0)\notag\\
&=\Pr(S_{k-N(k)}=s, N(k)\le K_1|X_1^k=x_1^k,S_0=s_0)\notag\\
&\hspace{3cm}+\sum_{i=k-K_1}^{k}\Pr(N(k)=i,S_{k-i}=s|X_1^k=x_1^k,S_0=s_0)\notag\\
&\stackrel{}{\le}\Pr(N(k)\le K_1|X_1^k=x_1^k,S_0=s_0)+\sum_{i=k-K_1}^{k}\Pr(N(k)=i,S_{k-i}=s|X_1^k=x_1^k,S_0=s_0)\notag\\
&\stackrel{(a)}{\le}\Pr(N(k)\le K_1)+\sum_{i=k-K_1}^{k}\Pr(N(k)=i,S_{k-i}=s|X_1^k=x_1^k,S_0=s_0)\notag\\
&\stackrel{(b)}{\le} \frac{\eps}{4}+\sum_{i=k-K_1}^{k}\Pr(N(k)=i,S_{k-i}=s|X_1^k=x_1^k,S_0=s_0)\notag\\
&\stackrel{(c)}{=}\frac{\eps}{4}+\sum_{i=k-K_1}^{k}\sum_{x_1^{k-i}}\Pr(Y(X_1^{k})=x_1^{k-i}|X_1^k=x_1^k,s_0) \Pr(S_{k-i}=s|Y(X_1^{k})=x_1^{k-i},S_0=s_0),\notag
\end{align}
where~$(a)$ follows from the fact that $N(k)$ is independent of $(X_1^k,S_0)$,~$(b)$ follows from the choice of $K_2$,~$(c)$ follows from the facts that 
$$
\Pr(N(k)=i,Y(X_1^{k})=x_1^{k-i}|X_1^k=x_1^k,s_0)=\Pr(Y(X_1^{k})=x_1^{k-i}|X_1^k=x_1^k,s_0)
$$
and that $S_{k-i}$ is independent of $X_1^k$ given $(Y(X_{1}^k),S_0)$. Then we have that for any current state $s$,
\begin{align}
&|\Pr(S_{k-N(k)}=s|X_1^k=x_1^k,S_0=s_0)-\Pr(S_{k-N(k)}=s|X_1^k=x_1^k,S_0=s_0^\prime)|\notag\\
&\le \frac{\eps}{2}+\sum_{i=k-K_1}^{k}\sum_{x_1^{k-i}}\left\{\Pr(Y(X_1^{k})=x_1^{k-i}|X_1^k=x_1^k,s_0)\right.\notag\\
&\hspace{1cm}\left.| \Pr(S_{k-i}=s|Y(X_1^{k})=x_1^{k-i},S_0=s_0)-\Pr(S_{k-i}=s|Y(X_1^{k})=x_1^{k-i},S_0=s_0^\prime)|\right\}\notag\\
&\stackrel{(a)}{\le} \frac{\eps}{2}+\frac{\eps}{2}\sum_{i=k-K_1}^{k}\sum_{x_1^{k-i}}\Pr(Y(X_1^{k})=x_1^{k-i}|X_1^k=x_1^k,s_0)\notag\\
&\le \eps,\notag
\end{align}
where~$(a)$ follows from~(\ref{fscindecomposable}).

\section{Proof of Lemma~\ref{sideinformation}}\label{sideinformationproof}
\begin{proof}
Note that 
\begin{align}\label{im600}
I(X_1^n;Z(X_1^n)|s_0)&=I(X_1^{n};Z(X_1^n),N(t_1),\cdots,N_{t_{m-1}+1}(t_{m}-t_{m-1})|s_0)\notag\\
&\hspace{1cm}-I(X_1^n;N(t_1),\cdots,N_{t_{m-1}+1}(t_{m}-t_{m-1})|Z(X_1^n),s_0).
\end{align}
As $N_{t_{i-1}+1}(t_{i}-t_{i-1})$ can take at most $t_{i}-t_{i-1}+1$ values for any $1\le i\le m$, we have that
\begin{align}\label{first}
I(X_1^n;N(t_1),\cdots,N_{t_{m-1}+1}(t_{m}-t_{m-1})|Z(X_1^n),s_0)\le \sum_{i=1}^{m}\log(t_{i}-t_{i-1}+1),
\end{align}
which, together with~(\ref{im600}), further implies~(\ref{sideinformationequation}).
\end{proof}
\section{Proof of Proposition~\ref{blockdifferenceindependent}}\label{blockdifferenceindependentproof}
\begin{proof}
Let $\eps>0$ and $K(\eps)$ be given as in Lemma~\ref{indecomposable}. From Lemma~\ref{sideinformation}, it follows that for any integer $k$,
%\begin{align}
%I(X_1^n;Z(X_1^n)|s_0)&=I(X_1^n;N(k),Z(X_1^n)|s_0)-I(X_1^n;N(k)|Z(X_1^n),s_0)\notag\\
%&=I(X_1^n;Z(X_1^k),Z(X_{k+1}^n)|s_0)-I(X_1^n;N(k)|Z(X_1^n),s_0).
%\end{align}
%As $N(k)$ can only take $k+1$ values, we have that
\begin{align}\label{first}
0\le I(X_1^n;Z(X_1^k),Z(X_{k+1}^n)|s_0)-I(X_1^n;Z(X_1^n)|s_0)\le \log (k+1).
\end{align}
Using the chain rule, we obtain that
 \begin{align}\label{firstblock}
 &I(X_1^n;Z(X_1^k),Z(X_{k+1}^n)|s_0)\notag\\
 &=I(X_1^k;Z(X_1^k),Z(X_{k+1}^n)|s_0)+I(X_{k+1}^n;Z(X_1^k)|X_1^k,s_0)+I(X_{k+1}^n;Z(X_{k+1}^n)|Z(X_1^k),X_1^k,s_0)\notag\\
 &\stackrel{(a)}{=}I(X_1^k;Z(X_1^k),Z(X_{k+1}^n)|s_0)+I(X_{k+1}^n;Z(X_{k+1}^n)|Z(X_1^k),X_1^k,s_0),
 \end{align}
 where~$(a)$ follows from the conditional independence of $X_{k+1}^n$ and $Z(X_1^k)$ given $(X_1^k,S_0)$. Then it follows that
 {\small\begin{align}\label{im1}
& |I(X_1^n;Z(X_1^n)|s_0)-I(X_1^n;Z(X_1^n)|s_0^\prime)|\notag\\
&\le 2\log(k+1) +|I(X_1^n;Z(X_1^k),Z(X_{k+1}^n)|s_0)-I(X_1^n;Z(X_1^k),Z(X_{k+1}^n)|s_0^\prime)|\notag\\
&=2\log(k+1) +|I(X_1^k;Z(X_1^k),Z(X_{k+1}^n)|s_0)-I(X_1^k;Z(X_1^k),Z(X_{k+1}^n)|s_0^\prime)|\notag\\
&\hspace{0.5cm}+|I(X_{k+1}^n;Z(X_{k+1}^n)|Z(X_1^k),X_1^k,s_0)-I(X_{k+1}^n;Z(X_{k+1}^n)|Z(X_1^k),X_1^k,s_0^\prime)|\notag\\
&\le 2[\log(k+1) +k\log |\mathcal{X}|]\notag\\
&\hspace{0.5cm}+|I(X_{k+1}^n;Z(X_{k+1}^n)|Z(X_1^k),X_1^k,s_0)-I(X_{k+1}^n;Z(X_{k+1}^n)|Z(X_1^k),X_1^k,s_0^\prime)|.
 \end{align}}
To complete the proof, it suffices to show
\begin{align}
&|I(X_{k+1}^n;Z(X_{k+1}^n)|Z(X_1^k),X_1^k,s_0)-I(X_{k+1}^n;Z(X_{k+1}^n)|Z(X_1^k),X_1^k,s_0^\prime)|\notag\\
& \le 2\log |\mathcal{S}|+(n-k)\log(|\mathcal{X}|\eps).
\end{align}
From Lemma~\ref{conditionallemma}, it follows that 
$$
|I(X_{k+1}^n;Z(X_{k+1}^n)|Z(X_1^k),X_1^k,s_0)-I(X_{k+1}^n;Z(X_{k+1}^n)|S_{k-N(k)},Z(X_1^k),X_1^k,s_0)|\le \log|\mathcal{S}|
$$
and
\begin{align}
|I(X_{k+1}^n;Z(X_{k+1}^n)|Z(X_1^k),X_1^k,s_0^\prime)-I(X_{k+1}^n;Z(X_{k+1}^n)|S_{k-N(k)},Z(X_1^k),X_1^k,s_0^\prime)|\le \log|\mathcal{S}|.\notag
\end{align}
As given $S_{k-N(k)}$ and $(X_1^k,S_0)$, $Z(X_1^k)$ is conditionally independent of $(X_{k+1}^n,Z(X_{k+1}^n))$, we have that
$$
I(X_{k+1}^n;Z(X_{k+1}^n)|S_{k-N(k)},Z(X_1^k),X_1^k,s_0)=I(X_{k+1}^n;Z(X_{k+1}^n)|S_{k-N(k)},X_1^k,s_0).
$$
From the definition of conditional mutual information, it follows that
{\small\begin{align}\label{firstconditionalblock}
&I(X_{k+1}^n;Z(X_{k+1}^n)|S_{k-N(k)},X_1^k,s_0)\notag\\
&=\sum_{s,x_1^k}\Pr(X_1^k=x_1^k)\Pr(S_{k-N(k)}=s|X_1^k=x_1^k,s_0)I(X_{k+1}^n;Z(X_{k+1}^n)|S_{k-N(k)}=s,X_1^k=x_1^k).
\end{align}}
Thus we have that
{\small\begin{align}
&|I(X_{k+1}^n;Z(X_{k+1}^n)|Z(X_1^k),X_1^k,s_0)-I(X_{k+1}^n;Z(X_{k+1}^n)|Z(X_1^k),X_1^k,s_0^\prime)|\notag\\
&\le 2\log |\mathcal{S}|+|\sum_{s,x_1^k}\left\{p_{S_{n-N(n)}}(s^\prime)p_{X_{1}^{k}}(x_1^k)I(X_{k+1}^{n};Z(X_{k+1}^{n})|S_{k-N(k)}=s,X_{1}^{k}=x_1^k)\right.\notag\\
&\hspace{3.5cm}\times\left.|p_{S_{k-N(k)}|X_{1}^{k},S_{0}}(s|x_1^k,s_0^\prime)-p_{S_{k-N(k)}|X_{1}^{k},S_{0}}(s|x_1^k,s_0)|\right\}\notag\\
&\stackrel{(c)}{=}2\log |\mathcal{S}|+I(X_{k+1}^{n};Z(X_{k+1}^{n})|S_{k-N(k)},X_{1}^{k})\eps\notag\\
&\le 2\log |\mathcal{S}|+(n-k)\log(|\mathcal{X}|\eps),
\end{align}}
where~$(c)$ follows from Lemma~\ref{indecomposable}.
Then the proof is complete.
\end{proof}

\section{Proof of Corollary~\ref{existenceindependent}}\label{existenceindependentproof}
\begin{proof}
{\bf Proof of~(i):} Let $\eps>0$ and $K(\eps)$ be given as in Lemma~\ref{indecomposable}. Using Lemma~\ref{sideinformation}, we have that
{\small\begin{align}\label{second}
|I(X_{n+1}^{2n};Z(X_{n+1}^{n+k}),Z(X_{n+k+1}^{2n})|S_{n-N(n)})-I(X_{n+1}^{2n};Z(X_{n+1}^{n+k}),Z(X_{n+k+1}^{2n})|S_{n-N(n)})|\le \log(k+1).
\end{align}}
Using similar argument as in the derivations of~(\ref{first}) and~(\ref{firstblock}),
 \begin{align}\label{secondblock}
 &I(X_{n+1}^{2n};Z(X_{n+1}^{n+k}),Z(X_{n+k+1}^{2n})|S_{n-N(n)})=I(X_{n+1}^{n+k};Z(X_{n+1}^{n+k}),Z(X_{n+k+1}^{2n})|S_{n-N(n)})\notag\\
 &\hspace{6cm}+I(X_{n+k+1}^{2n};Z(X_{n+k+1}^{2n})|Z(X_{n+1}^{n+k}),X_{n+1}^{n+k},S_{n-N(n)}).
 \end{align}
 Then it follows that
 {\small\begin{align}\label{im1}
& |I(X_1^n;Z(X_1^n)|s_0)-I(X_{n+1}^{2n};Z(X_{n+1}^{2n})|S_{n-N(n)})|\notag\\
&=2\log(k+1) +|I(X_1^n;Z(X_1^k),Z(X_{k+1}^n)|s_0)-I(X_{n+1}^{2n};Z(X_{n+1}^{n+k}),Z(X_{n+k+1}^{2n})|S_{n-N(n)})|\notag\\
&=2\log(k+1) +|I(X_1^k;Z(X_1^k),Z(X_{k+1}^n)|s_0)-I(X_{n+1}^{n+k};Z(X_{n+1}^{n+k}),Z(X_{n+k+1}^{2n})|S_{n-N(n)})|\notag\\
&\quad +|I(X_{k+1}^n;Z(X_{k+1}^n)|Z(X_1^k),X_1^k,s_0)-I(X_{n+k+1}^{2n};Z(X_{n+k+1}^{2n})|Z(X_{n+1}^{n+k}),X_{n+1}^{n+k},S_{n-N(n)})\notag\\
&le 2[\log(k+1) +k\log |\mathcal{X}|]\notag\\
&\quad +|I(X_{k+1}^n;Z(X_{k+1}^n)|Z(X_1^k),X_1^k,s_0)-I(X_{n+k+1}^{2n};Z(X_{n+k+1}^{2n})|Z(X_{n+1}^{n+k}),X_{n+1}^{n+k},S_{n-N(n)})|.
 \end{align}}
To complete the proof it suffices to show 
\begin{align}\label{im500}
&|I(X_{k+1}^n;Z(X_{k+1}^n)|Z(X_1^k),X_1^k,s_0)-I(X_{n+k+1}^{2n};Z(X_{n+k+1}^{2n})|Z(X_{n+1}^{n+k}),X_{n+1}^{n+k},S_{n-N(n)})|\notag\\
&\quad\le 2\log |\mathcal{S}|+(n-k)\eps\log|\mathcal{X}|.
\end{align}
From~Lemma~\ref{conditionallemma}, it follows that 
$$
|I(X_{k+1}^n;Z(X_{k+1}^n)|Z(X_1^k),X_1^k,s_0)-I(X_{k+1}^n;Z(X_{k+1}^n)|S_{k-N(k)},Z(X_1^k),X_1^k,s_0)|\le \log|\mathcal{S}|
$$
and
\begin{align}
&|I(X_{n+k+1}^{2n};Z(X_{n+k+1}^{2n})|S_{k-N_{n+1}(k)+n-N(n)},Z(X_{n+1}^{n+k}),X_{n+1}^{n+k},S_{n-N(n)})\notag\\
&\hspace{3cm}-I(X_{n+k+1}^{2n};Z(X_{n+k+1}^{2n})|Z(X_{n+1}^{n+k}),X_{n+1}^{n+k},S_{n-N(n)})|\le \log|\mathcal{S}|\notag.
\end{align}
As {\small\begin{align}
&p_{Z(X_{n+1}^{n+k}),(X_{n+k+1}^{2n},Z(X_{n+k+1}^{2n}))|S_{k-N_{n+1}(k)+n-N(n)}, S_{n-N(n)},X_{n+1}^{n+k}}(\cdot|\cdot)\notag\\
&=p_{Z(X_{n+1}^{n+k})|S_{k-N_{n+1}(k)+n-N(n)}, S_{n-N(n)},X_{n+1}^{n+k}}(\cdot|\cdot)p_{X_{n+k+1}^{2n},Z(X_{n+k+1}^{2n})|S_{k-N_{n+1}(k)+n-N(n)}, S_{n-N(n)},X_{n+1}^{n+k}}(\cdot|\cdot),\notag
\end{align} }we have that
{\small
\begin{align}
&I(X_{n+k+1}^{2n};Z(X_{n+k+1}^{2n})|S_{k-N_{n+1}(k)+n-N(n)},Z(X_{n+1}^{n+k}),X_{n+1}^{n+k},S_{n-N(n)})\notag\\
&\hspace{4cm}=I(X_{n+k+1}^{2n};Z(X_{n+k+1}^{2n})|S_{k-N_{n+1}(k)+n-N(n)},X_{n+1}^{n+k},S_{n-N(n)})).
\end{align}}
From the definition of conditional mutual information, it follows that
\begin{align}\label{secondconditionalblock}
&I(X_{n+k+1}^{2n};Z(X_{n+k+1}^{2n})|S_{k-N_{n+1}(k)+n-N(n)},X_{n+1}^{n+k},S_{n-N(n)}))\notag\\
&\stackrel{(a)}{=}\sum_{s,s^\prime,x_1^k}\bigg\{p_{S_{n-N(n)}}(s^\prime)p_{X_{n+1}^{n+k}}(x_1^k)p_{S_{k-N_{n+1}(k)+n-N(n)}|X_{n+1}^{n+k},S_{n-N(n)}}(s|x_1^k,s^\prime)\notag\\
&\hspace{2cm}\times I(X_{n+k+1}^{2n};Z(X_{n+k+1}^{2n})|S_{k-N_{n+1}(k)+n-N(n)}=s,X_{n+1}^{n+k}=x_1^k)\bigg\}\notag\\
&\stackrel{(b)}{=}\sum_{s,s^\prime,x_1^k}p_{S_{n-N(n)}}(s^\prime)p_{X_{1}^{k}}(x_1^k)p_{S_{k-N(k)}|X_{1}^{k},S_{0}}(s|x_1^k,s^\prime)I(X_{k+1}^{n};Z(X_{k+1}^{n})|S_{0}=s,X_{1}^{k}=x_1^k),
\end{align}
where~$(a)$ follows from the independence of $S_{n-N(n)}$ and $X_{n+1}^{n+k}$ and $(b)$ follows from the fact that $X_1^n$ and $X_{n+1}^{2n}$ are independent and have the same distribution.
It then follows that
{\small\begin{align}\label{im501}
&|I(X_{k+1}^n;Z(X_{k+1}^n)|Z(X_1^k),X_1^k,s_0)-I(X_{n+k+1}^{2n};Z(X_{n+k+1}^{2n})|Z(X_{n+1}^{n+k}),X_{n+1}^{n+k},S_{n-N(n)})|\notag\\
&\le 2\log |\mathcal{S}|+\sum_{s,s^\prime,x_1^k}\bigg\{p_{S_{n-N(n)}}(s^\prime)p_{X_{1}^{k}}(x_1^k)I(X_{k+1}^{n};Z(X_{k+1}^{n})|S_{k-N(k)}=s,X_{1}^{k}=x_1^k)\notag\\
&\hspace{3.5cm}\times|p_{S_{k-N(k)}|X_{1}^{k},S_{0}}(s|x_1^k,s^\prime)-p_{S_{k-N(k)}|X_{1}^{k},S_{0}}(s|x_1^k,s_0)|\bigg\}\notag\\
&\stackrel{(c)}{=}2\log |\mathcal{S}|+I(X_{k+1}^{n};Z(X_{k+1}^{n})|S_{k-N(k)},X_{1}^{k})\eps,
\end{align}}
where~$(c)$ follows from Lemma~\ref{indecomposable}. Then~(\ref{im500}) follows from~(\ref{im501}), as desired.

{\bf Proof of~(ii):} We first show that $\lim_{k\to\infty}I(X_1^{kn};Z(X_1^{kn})|s_0)/k$ exists. 

Let $a_{k}=I(X_1^{kn};Z(X_1^{kn})|s_0)/k$ and let $k_1$ and $k_2$ be two positive integers. Then from Lemma~\ref{sideinformation}, we have that 
{\small\begin{align}
|I(X_1^{(k_1+k_2)n};Z(X_1^{(k_1+k_2)n})|s_0)-I(X_1^{(k_1+k_2)n};Z(X_1^{k_1n}),Z(X_{k_1n+1}^{(k_1+k_2)n})|s_0)|\le \log(1+k_1n)
\end{align}}
and
{\small\begin{align}
&I(X_1^{(k_1+k_2)n};Z(X_1^{k_1n}),Z(X_{k_1n+1}^{(k_1+k_2)n})|s_0)\notag\\
&=I(X_1^{k_1n};Z(X_1^{k_1n}),Z(X_{k_1n+1}^{(k_1+k_2)n})|s_0)+I(X_{k_1n+1}^{(k_1+k_2)n};Z(X_1^{k_1n}),Z(X_{k_1n+1}^{(k_1+k_2)n})|X_1^{k_1n},s_0)\notag\\
&\ge I(X_1^{k_1n};Z(X_1^{k_1n})|s_0)+I(X_{k_1n+1}^{(k_1+k_2)n};Z(X_{k_1n+1}^{(k_1+k_2)n})|X_1^{k_1n},s_0)\\
&=k_1a_{k_1}+I(X_{k_1n+1}^{(k_1+k_2)n};Z(X_{k_1n+1}^{(k_1+k_2)n})|X_1^{k_1n},s_0)\notag\\
&=k_1a_{k_1}+I(X_{k_1n+1}^{(k_1+k_2)n};Z(X_{k_1n+1}^{(k_1+k_2)n})|S_{{k_1n}-N(k_1n)})-\log |\mathcal{S}|.
\end{align}}
For any $s_0$, we have $p_{X_{k_1n+1}^{(k_1+k_2)n})|S_{{k_1n}-N(k_1n)}}(\cdot|s_0)=p_{X_{k_1n+1}^{(k_1+k_2)n}}(\cdot)$. Together with the fact that
 $p_{X_{k_1n+1}^{(k_1+k_2)n}}(\cdot)=p_{X_{1}^{k_2n}}(\cdot)$, we have that for any $s_0$,
$$
p_{X_{1}^{k_2n},Z(X_{1}^{k_2n})|S_0}(\cdot|s_0)=p_{X_{1}^{(k_1+k_2)n},Z(X_{k_1n+1}^{(k_1+k_2)n})|S_{{k_1n}-N(k_1n)}}(\cdot|s_0),
$$
which further implies that
{\small\begin{align}
&I(X_{k_1n+1}^{(k_1+k_2)n};Z(X_{k_1n+1}^{(k_1+k_2)n})|S_{{k_1n}-N(k_1n)})\notag\\
&=\sum_{s}p_{S_{{k_1n}-N(k_1n)}}(s)I(X_{k_1n+1}^{(k_1+k_2)n};Z(X_{k_1n+1}^{(k_1+k_2)n})|S_{{k_1n}-N(k_1n)}=s)\notag\\
&=\sum_{s}p_{S_{{k_1n}-N(k_1n)}}(s)I(X_{1}^{k_2n};Z(X_{1}^{k_2n})|S_0=s).
\end{align}}
Let $\eps>0$. Then it follows from Proposition~\ref{blockdifferenceindependent} that for $K(\eps)\le k\le k_2n$,
\begin{align}
&|I(X_{k_1n+1}^{(k_1+k_2)n};Z(X_{k_1n+1}^{(k_1+k_2)n})|S_{{k_1n}-N(k_1n)})-I(X_{1}^{k_2n};Z(X_{1}^{k_2n})|s_0)|\notag\\
&\le \sum_{s}p_{S_{{k_1n}-N(k_1n)}}(s)|I(X_{1}^{k_2n};Z(X_{1}^{k_2n})|S_0=s)-I(X_{1}^{k_2n};Z(X_{1}^{k_2n})|s_0)|\notag\\
&\le 2(\log|\mathcal{S}|+\log(k+1)+k\log|\mathcal{X}|)+(k_2n-k)\eps\log|\mathcal{X}|,
\end{align}
which further implies that
{\small\begin{align}
&I(X_1^{(k_1+k_2)n};Z(X_1^{(k_1+k_2)n})|s_0)\notag\\
& \ge I(X_1^{(k_1+k_2)n};Z(X_1^{k_1n}),Z(X_{k_1n+1}^{(k_1+k_2)n})|s_0)-\log(1+k_1n)\notag\\
&\ge k_1a_{k_1}+I(X_{k_1n+1}^{(k_1+k_2)n};Z(X_{k_1n+1}^{(k_1+k_2)n})|S_{{k_1n}-N(k_1n)})-\log(1+k_1n)-\log |\mathcal{S}|\notag\\
&\ge k_1a_{k_1}+k_2a_{k_2}-2(\log|\mathcal{S}|+\log(k+1)-k\log|\mathcal{X}|)+(k_2n-k)\eps\log|\mathcal{X}|-\log( |\mathcal{S}|(1+k_1n)).
\end{align}}
Fix $k=K(\eps)+1$ and choose $k_1>k_2$ such that  
$$
\frac{-2(\log|\mathcal{S}|+\log(k+1)+k\log|\mathcal{X}|)-(k_2n-k)\eps\log|\mathcal{X}|-\log( |\mathcal{S}|(1+k_1n))}{k_1+k_2}\ge -\eps.
$$
Then
$(k_1+k_2)(a_{k_1+k_2}-\eps)\ge k_1(a_{k_1}-\eps)+k_2(a_{k_2}-\eps)$. From~\cite[Lemma 2, pp.~112]{gallagerbook}, $\lim_{k\to \infty}a_k=\lim_{n\to \infty}\left\{a_k-\eps/k\right\}$ exists.

For any integers $m$ and $n$, let $m=kn+r(m,n)$ and $r(m,n)$ be the remainder when $m$ is divided by $n$. Note that
\[
|I(X_1^m;Z(X_1^m)|s_0)-I(X_1^{kn};Z(X_1^{kn})|s_0)|\le r(m,n)(\log|\mathcal{X}|+\log |\mathcal{Z}|)\le n(\log|\mathcal{X}|+\log |\mathcal{Z}|).
\]
Hence we have
$$
\lim_{m\to\infty}\frac{I(X_1^m;Z(X_1^m)|s_0)}{m}=\lim_{k\to\infty}\frac{I(X_1^{kn};Z(X_1^{kn})|s_0)}{kn}\quad\mbox{exists},
$$
as desired.

{\bf Proof of~(iii):} As the proof for different $j$ is similar, we only prove the claim for $j=1$. Let $X^{(1)tn-1}_{1}=(X^{(1)}_{1},\cdots,X^{(1)}_{tn-1})$ and $Z(X^{(1)tn-1}_{1})$ be the output obtained by passing $X^{(1)tn-1}_{1}$ through the channel $W^{tn-1}$.
Using Lemma~\ref{sideinformation}, we have that 
\begin{align}\label{im311}
|I(X^{(1)tn-1}_{1};Z(X^{(1)tn-1}_{1})|s_0)-I(X^{(1)tn-1}_{1};Z(X^{(1)n-1}_{1}),Z(X^{(1)tn-1}_{n})|s_0)|\le \log n.
\end{align}
Note that 
\begin{align}\label{im312}
&I(X^{(1)tn-1}_{1};Z(X^{(1)n-1}_{1}),Z(X^{(1)tn-1}_{n})|s_0)\notag\\
&=I(X^{(1)n-1}_{1};Z(X^{(1)n-1}_{1}),Z(X^{(1)tn-1}_{n}))|s_0)+I(X^{(1)tn-1}_{n};Z(X^{(1)n-1}_{1}),Z(X^{(1)tn-1}_{n})|s_0,X^{(1)n-1}_{1})\notag\\
&\stackrel{(a)}{=}I(X^{(1)n-1}_{1};Z(X^{(1)n-1}_{1}),Z(X^{(1)tn-1}_{n}))|s_0)+I(X^{(1)tn-1}_{n};Z(X^{(1)tn-1}_{n})|s_0,X^{(1)n-1}_{1}),
\end{align}
where~$(a)$ follows from the fact that given $(S_0,X^{(1)n-1}_{1})$, $(X^{(1)tn-1}_{n},Z(X^{(1)tn-1}_{n}))$ is conditionally independent of $Z(X^{(1)n-1}_{1})$. 
As 
{\small$$|I(X^{(1)tn-1}_{n};Z(X^{(1)tn-1}_{n})|s_0,X^{(1)n-1}_{1})-I(X^{(1)tn-1}_{n};Z(X^{(1)tn-1}_{n})|s_0,X^{(1)n-1}_{1},S_{n-1-N(n-1)})|\le \log|\mathcal{S}|$$}
and
$X^{(1)tn-1}_{n}$ is independent of $X^{(1)n-1}_{1}$,
$$|I(X^{(1)tn-1}_{n};Z(X^{(1)tn-1}_{n})|s_0,X^{(1)n-1}_{1})-I(X^{(1)tn-1}_{n};Z(X^{(1)tn-1}_{n})|S_{n-1-N(n-1)})|\le \log|\mathcal{S}|.$$
As $p_{X^{(1)tn-1}_{n}|S_{n-1-N(n-1)}}(\cdot|s)=p_{X_1^{(t-1)n}}(\cdot)$ for any $s$, it follows from Proposition~\ref{blockdifferenceindependent} that for any $\eps>0$ and $ (t-1)n\ge k\ge K(\eps)$,
\begin{align}\label{im313}
&|I(X^{(1)tn-1}_{n};Z(X^{(1)tn-1}_{n})|S_{n-1-N(n-1)})-I(X_1^{(t-1)n};Z(X_1^{(t-1)n})|s_0)|\notag\\
&\le \sum_{s}p_{S_{n-1-N(n-1)}}(s)|I(X^{(1)tn-1}_{n};Z(X^{(1)tn-1}_{n})|S_{n-1-N(n-1)}=s)-I(X_1^{(t-1)n};Z(X_1^{(t-1)n})|s_0)|\notag\\
&\le 2(\log|\mathcal{S}|+\log(k+1)+k\log|\mathcal{X}|)+((t-1)n-k)\eps\log|\mathcal{X}|.
\end{align}
Combining~(\ref{im311}),~(\ref{im312}), and~(\ref{im313}), we obtain that
{\small\begin{align}
&|I(X^{(1)};Z(X^{(1)})|s_0)-I(X;Z(X)|s_0)|\notag\\
&\le\lim_{t\to\infty}\frac{|I(X^{(1)tn-1}_{1};Z(X^{(1)tn-1}_{1})|s_0)-I(X^{(t-1)n}_{1};Z(X^{(t-1)n-1}_{1})|s_0)|}{nt-1}\notag\\
&\le \eps\log|\mathcal{X}|.
\end{align}}
As $\eps$ is arbitrary, we have $I(X^{(1)};Z(X^{(1)})|s_0)=I(X;Z(X)|s_0)$, as desired.

{\bf Proof of~(iv):} Note that 
\begin{align}
&I({X}_1^{kn};Z({X}_1^{n}),\cdots,Z({X}_{(k-1)n+1}^{kn})|s_0)\notag\\
&=\sum_{i=1}^{k}I({X}_{(i-1)n+1}^{in};Z({X}_1^{n}),\cdots,Z({X}_{(k-1)n+1}^{kn})|\hat{X}_{1}^{(i-1)n},s_0)\notag\\
&\stackrel{(b)}{=}\sum_{i=1}^{k}I({X}_{(i-1)n+1}^{in};Z({X}_{(i-1)n+1}^{in}),\cdots,Z({X}_{(k-1)n+1}^{kn})|\hat{X}_{1}^{(i-1)n},s_0)\notag\\
&\stackrel{}{=}\sum_{i=1}^{k}\left\{I({X}_{(i-1)n+1}^{in};Z({X}_{(i-1)n+1}^{in})|\hat{X}_{1}^{(i-1)n},s_0)\right.\notag\\
&\hspace{0.6cm}\left.+I({X}_{(i-1)n+1}^{in};Z({X}_{in+1}^{(i+1)n}),\cdots,Z({X}_{(k-1)n+1}^{kn})|\hat{X}_{1}^{(i-1)n},Z({X}_{(i-1)n+1}^{in}),s_0)\right\}\notag,
\end{align}
where~$(b)$ follows from the fact that given $(s_0,\hat{X}_{1}^{(i-1)n})$, $(Z({X}_1^{n}),\cdots,Z({X}_{(i-2)n+1}^{(i-1)n}))$ is conditionally independent of $({X}_{(i-1)n+1}^{in},Z({X}_{(i-1)n+1}^{in}),\cdots,Z({X}_{(k-1)n+1}^{kn}))$ .

As given $(S_{in-N(in)},\hat{X}_{1}^{(i-1)n},Z({X}_{(i-1)n+1}^{in}),s_0)$, $(Z({X}_{in+1}^{(i+1)n}),\cdots,Z({X}_{(k-1)n+1}^{kn}))$ is conditionally independent of ${X}_{(i-1)n+1}^{in}$, we have that
$$I({X}_{(i-1)n+1}^{in};Z({X}_{in+1}^{(i+1)n}),\cdots,Z({X}_{(k-1)n+1}^{kn})|\hat{X}_{1}^{(i-1)n},Z({X}_{(i-1)n+1}^{in}),s_0,S_{in-N(in)})=0,$$
which together with the fact that 
\begin{align}
&|I({X}_{(i-1)n+1}^{in};Z({X}_{in+1}^{(i+1)n}),\cdots,Z({X}_{(k-1)n+1}^{kn})|\hat{X}_{1}^{(i-1)n},Z({X}_{(i-1)n+1}^{in}),s_0,S_{in-N(in)}))\notag\\
&\hspace{.2cm}-I({X}_{(i-1)n+1}^{in};Z({X}_{in+1}^{(i+1)n}),\cdots,Z({X}_{(k-1)n+1}^{kn})|\hat{X}_{1}^{(i-1)n},Z({X}_{(i-1)n+1}^{in}),s_0|\le \log|\mathcal{S}|,
\end{align}
implies that
\begin{align}\label{im317}
&\left |I({X}_1^{kn};Z({X}_1^{n}),\cdots,Z({X}_{(k-1)n+1}^{kn})|s_0)-\sum_{i=1}^{k}I({X}_{(i-1)n+1}^{in};Z({X}_{(i-1)n+1}^{in})|\hat{X}_{1}^{(i-1)n},s_0)\right|\notag\\
&\le \sum_{i=1}^{k}I({X}_{(i-1)n+1}^{in};Z({X}_{in+1}^{(i+1)n}),\cdots,Z({X}_{(k-1)n+1}^{kn})|\hat{X}_{1}^{(i-1)n},Z({X}_{(i-1)n+1}^{in}),s_0)\notag\\
&\le k\log|\mathcal{S}|.
\end{align}
It then follows from (ii) in Corollary~\ref{existenceindependent} that for any $\delta>0$ and $n\ge t\ge K(\delta)$
\begin{align}\label{im316}
&|I({X}_{(i-1)n+1}^{in};Z({X}_{(i-1)n+1}^{in})|S_{(i-1)n-N((i-1)n)})-I({X}_{1}^{n};Z({X}_{1}^{n})|s_0)|\notag\\
&\le 2(\log|\mathcal{S}|+\log(t+1)+t\log|\mathcal{X}|)+(n-t)\delta\log|\mathcal{X}|.
\end{align}
Note that
\begin{align}\label{im315}
&|I({X}_{(i-1)n+1}^{in};Z({X}_{(i-1)n+1}^{in})|{X}_{1}^{(i-1)n},s_0)-I({X}_{(i-1)n+1}^{in};Z({X}_{(i-1)n+1}^{in})|S_{(i-1)n-N((i-1)n)})|\notag\\
&\le |I({X}_{(i-1)n+1}^{in};Z({X}_{(i-1)n+1}^{in})|{X}_{1}^{(i-1)n},S_{(i-1)n-N((i-1)n)},s_0)\notag\\
&\hspace{1cm}-I({X}_{(i-1)n+1}^{in};Z({X}_{(i-1)n+1}^{in})|S_{(i-1)n-N((i-1)n)})|+\log|\mathcal{S}|\notag\\
&\stackrel{(a)}{=}\log|\mathcal{S}|,
\end{align}
where~$(a)$ follows from the fact that given $S_{(i-1)n-N((i-1)n)}$, $({X}_{(i-1)n+1}^{in},Z({X}_{(i-1)n+1}^{in}))$ is conditionally independent of $(\hat{X}_{1}^{(i-1)n},S_0)$. Combining~(\ref{im317}),~(\ref{im316}), and~(\ref{im315}), we obtain that
\begin{align}
&\left|\lim_{k\to\infty}\frac{I({X}_1^{kn};Z({X}_1^{n}),\cdots,Z({X}_{(k-1)n+1}^{kn})|s_0)}{kn}-\frac{I({X}_{1}^{n};Z({X}_{1}^{n})|s_0)}{n}\right|\notag\\
&\le \frac{2(2\log|\mathcal{S}|+\log(t+1)+t\log|\mathcal{X}|)+(n-t)\delta\log|\mathcal{X}|}{n}.
\end{align}
Choosing $N$ such that $\frac{2(2\log|\mathcal{S}|+\log(t+1)+t\log|\mathcal{X}|)}{N}\le\frac{\eps}{2}$ and $\delta$ such that $\delta\log|\mathcal{X}|\le\frac{\eps}{2}$, we then have that for $n\ge N$,
\[
\left|\lim_{k\to\infty}\frac{I({X}_1^{kn};Z({X}_1^{n}),\cdots,Z({X}_{(k-1)n+1}^{kn})|s_0)}{kn}-\frac{I({X}_{1}^{n};Z({X}_{1}^{n})|s_0)}{n}\right|\le\eps.
\]
\end{proof}

\section{Proof of Proposition~\ref{blockdifferencestationary}}\label{blockdifferencestationaryproof}
\begin{proof}
From Lemma~\ref{sideinformation}, it follows that for any $1\le t\le n$
{\small\begin{align}\label{im01}
&\frac{|I(X_1^n;Z(X_1^n)|s_0)-I(X_{k+1}^{k+n};Z(X_{k+1}^{k+n}))|}{n}\notag\\
&\hspace{1cm}\le \frac{|I(X_1^n;Z(X_1^t),Z(X_{t+1}^n)|s_0)-I(X_{k+1}^{k+n};Z(X_{k+1}^{k+t}),Z(X_{k+t+1}^{k+n}))|}{n}+\frac{2\log(t+1)}{n}\notag\\
&\hspace{1cm}\le \frac{|I(X_{t+1}^n;Z(X_{t+1}^n)|s_0)-I(X_{k+t+1}^{k+n};Z(X_{k+t+1}^{k+n}))|}{n}+\frac{2(t(\log|\mathcal{X}|+\log|\mathcal{Z}|)+\log(t+1))}{n}
\end{align}}

%Let $N_1$ be the number of deletions occured during the transmission of $X_1^t$ and $N_2$ be the number of deletions occurred during the transmission of $X_{k+1}^{k+t}$. Let $N$ be the number of deletions during the transmission of $X_1^k$. Then $N_1$ and $N_2$ are independent and have the same distribution. 

Note that for any $x_{t+1}^n$ and $z_2^*\in \mathcal{Z}^*_{n-t}$
{\small\begin{align}
&\hspace{-1cm}\Pr(X_{t+1}^n=x_{t+1}^n,Z(X_{t+1}^n)=z_2^*|s_0)\notag\\
&\hspace{-1cm}\stackrel{}{=}\sum_{s}\Pr(X_{t+1}^{n}=x_{t+1}^n, S_{t-N(t)}=s|s_0)\Pr(Z(X_{t+1}^n)=z_2^*|X_{t+1}^n=x_{t+1}^n, S_{t-N(t)}=s)\notag\\
&\hspace{-1cm}\stackrel{(a)}{=}\sum_{s,\hat{x}_1^t}P_{X_1^n}(\hat{x}_1^tx_{t+1}^n) P_{S_{t-N(t)}|X_1^t,S_0}(s|\hat{x}_1^t,s_0)\Pr(Z(X_{t+1}^n)=z_2^*|X_{t+1}^{n}=x_{t+1}^n, S_{t-N(t)}=s)\notag
\end{align}}
where~$(a)$ follows from the conditional independence of $S_{t-N_1}$ and $X_{t+1}^n$ given $(S_0,X_{1}^{t})$. Similarly, we have that
\begin{align}
&\Pr(X_{k+t+1}^{k+n}=x_{t+1}^n,Z(X_{k+t+1}^{k+n})=z_2^*)\notag\\
&\stackrel{}{=}\sum_{s}p_{X_{k+t+1}^{k+n},S_{t+k-N(k)-N_{k+1}(t)}}(x_{t+1}^n,s)  \Pr(Z(X_{k+t+1}^{k+n}=z_2^*|X_{k+t+1}^{k+n}=x_{t+1}^{n}, S_{t+k-N-N_{k+1}(t)}=s)\notag\\
&\stackrel{(b)}{=}\sum_{s,s^\prime,\hat{x}_1^t}\left\{p_{X_{k+1}^{k+n}}(\hat{x}_1^t x_{t+1}^n)p_{S_{k-N(k)}|X_{k+1}^{k+n}}(s^\prime|\hat{x}_1^t x_{k+1}^n) p_{S_{t+k-N(k)-N_{k+1}(t)}|X_1^t,S_{k-N(k)}}(s|\hat{x}_1^t,s^\prime)\right.\notag\\
&\hspace{5cm}\times\left. \Pr(Z(X_{t+1}^{n})=z_2^*|X_{t+1}^{n}=x_{t+1}^{n}, S_{t-N(t)}=s)\right\},
\end{align}
where in~$(b)$ we have used the fact that
\begin{align}
&\Pr(Z(X_{t+1}^n)=z_2^*|X_{t+1}^{n}=x_{t+1}^n, S_{t-N(t)}=s)\notag\\
&\hspace{2cm}=\Pr(Z(X_{t+k+1}^{k+n})=z_2^*|X_{t+k+1}^{k+n}=x_{t+1}^{n}, S_{t+k-N(k)-N(t)}=s).
\end{align}
From Lemma~\ref{indecomposable} it then follows that for $t\ge K(\eps)$
\begin{align}
|p_{S_{t-N(t)}|X_1^t,S_0}(s|x_1^t,s_0)-p_{S_{t+k-N(k)-N_{k+1}(t)}|X_1^t,S_{k-N(k)}}(s|x_1^t,s^\prime)|\le \eps,
\end{align}
which, together with the linearity of expectations, further implies that
\begin{align}
&|\Pr(X_{k+t+1}^{k+n}=x_{t+1}^n,Z(X_{k+t+1}^{k+n})=z_2^*)-\Pr(X_{t+1}^n=x_{t+1}^n, Z(X_{t+1}^n)=z_2^*|s_0)|\notag\\
&\le\sum_{s,s^\prime,\hat{x}_1^t}\left\{p_{X_{k+1}^{k+n}}(\hat{x}_1^t x_{t+1}^n)p_{S_{k-N(k)}|X_{k+1}^{k+n}}(s^\prime|\hat{x}_1^t x_{k+1}^n) \Pr(Z(X_{t+1}^{n})=z_2^*|X_{t+1}^{n}=x_{t+1}^{n}, S_{t-N(t)}=s)\right.\notag\\
&\hspace{5cm}\times\left.|p_{S_{t-N(t)}|X_1^t,S_0}(s|x_1^t,s_0)-p_{S_{t+k-N(k)-N_2}|X_1^t,S_{k-N(k)}}(s|x_1^t,s^\prime)| \right\}\notag\\
&\le\sum_{s,s^\prime,\hat{x}_1^t}p_{X_{k+1}^{k+n}}(\hat{x}_1^t x_{t+1}^n)p_{S_{k-N(k)}|X_{k+1}^{k+n}}(s^\prime|\hat{x}_1^t x_{k+1}^n) \Pr(Z(X_{t+1}^{n})=z_2^*|X_{t+1}^{n}=x_{t+1}^{n}, S_{t-N(t)}=s)\eps\notag\\
&=\sum_{s}p_{X_{k+t+1}^{k+n}}(x_{t+1}^n)\Pr(Z(X_{t+1}^{n})=z_2^*|X_{t+1}^{n}=x_{t+1}^{n}, S_{t-N(t)}=s)\eps.
\end{align}
For any two probability mass functions $p(\cdot),q(\cdot)$ over some finite alphabet $\mathcal{U}$, we use $\| p-q \|$ to denote $\sum_{u\in\mathcal{U}}|p(u)-q(u)|$. Then
\begin{align}
 \| p_{X_{k+t+1}^{k+n},Z(X_{k+t+1}^{k+n})}(\cdot,\cdot)-p_{X_{t+1}^n, Z(X_{t+1}^n)|s_0}(\cdot,\cdot|s_0) \|\le |\mathcal{S}|\eps.
\end{align}
Similarly,
$$
\| p_{Z(X_{k+t+1}^{k+n})}(\cdot)-p_{ Z(X_{t+1}^n)|s_0}(\cdot|s_0)\|\le |\mathcal{S}|\eps.
$$
Using~{\cite[Theorem 17.3.3, pp.~664]{Cover}}, we have that
{\small\begin{align}\label{im2}
&\frac{|H(X_{t+1}^n,Z(X_{t+1}^n)|s_0)-H(X_{k+t+1}^{k+n},Z(X_{k+t+1}^{k+n}))|}{n}\stackrel{(a)}{\le} \eps|\mathcal{S}|(\log|\mathcal{X}|+\log|\mathcal{Z}|)-\frac{\eps|\mathcal{S}|\log(|\mathcal{S}|\eps)}{n},
\end{align}}
where in~$(a)$ we have used the fact that the total number of possible values of $(X_{t+1}^n,Z(X_{t+1}^n))$ is less than $|\mathcal{Z}|^{n-t+1}|\mathcal{X}|^{n-t}$.
Similarly, we have that
\begin{align}\label{im3}
\frac{|H(Z(X_{t+1}^n)|s_0)-H(Z(X_{k+t+1}^{k+n}))|}{n}\le \eps|\mathcal{S}|\log|\mathcal{Z}|-\frac{\eps|\mathcal{S}|\log(|\mathcal{S}|\eps)}{n}.
\end{align}
Combining~(\ref{im01}),(\ref{im2}) and~(\ref{im3}) together, we obtain that
\begin{align}
&\frac{|I(X_1^n;Z(X_1^n)|s_0)-I(X_{k+1}^{k+n};Z(X_{k+1}^{k+n}))|}{n}\notag\\
&\le \eps|\mathcal{S}|(\log|\mathcal{X}|+2\log|\mathcal{Z}|)+\frac{2(t(\log|\mathcal{X}|+\log|\mathcal{Z}|)+\log(t+1)-\eps|\mathcal{S}|\log(|\mathcal{S}|\eps))}{n},\notag
\end{align}
as desired.
\end{proof}
\section{Proof of Corollary~\ref{initialindependent}}\label{initialindependentproof}
\begin{proof}
The proof of the existence of the limit in $I(X;Y|s_0)$ is similar to that of (ii), so we omit the proof. Let $\delta>0$ and $k\ge K(\delta)$, where $K(\delta)$ is given as in Lemma~\ref{indecomposable}. Let $s_0$ and $s_0^\prime$ be two different initial states and $n\ge k$. Using similar arguments as in the derivation of~(\ref{im1}), we obtain that
\begin{align}\label{im41}
&|I(X_1^n;Z(X_1^n)|s_0)-I(X_1^n;Z(X_1^n)|s_0^\prime)|\le 2[\log(k+1) +k\log |\mathcal{X}|]\notag\\
&\hspace{0.5cm}+|I(X_{k+1}^n;Z(X_{k+1}^n)|Z(X_1^k),X_1^k,s_0)-I(X_{k+1}^n;Z(X_{k+1}^n)|Z(X_1^k),X_1^k,s_0^\prime)|.
\end{align}
Then
{\small\begin{align}\label{im42}
&|I(X_{k+1}^n;Z(X_{k+1}^n)|Z(X_1^k),X_1^k,s_0)-I(X_{k+1}^n;Z(X_{k+1}^n)|Z(X_1^k),X_1^k,s_0^\prime)|\notag\\
&\hspace{1cm}\le |I(X_{k+1}^n;Z(X_{k+1}^n)|S_{k-N(k)},X_1^k,s_0)-I(X_{k+1}^n;Z(X_{k+1}^n)|S_{k-N(k)},X_1^k,s_0^\prime)|+2\log |\mathcal{S}|,
\end{align} }
where~(\ref{im42}) follows from the conditional independence of $Z(X_1^k)$ and $(X_{k+1}^n,Z(X_{k+1}^n)$ given $(S_{k-N},X_1^k,s_0)$.

From the definition of conditional mutual inofrmation we have that
\begin{align}\label{im43}
&I(X_{k+1}^n;Z(X_{k+1}^n)|S_{k-N(k)},X_1^k,s_0)\notag\\
&=\sum_{x_1^k,s}p_{X_1^k}(x_1^k)p_{S_{k-N(k)}|X_1^k,S_0}(s|x_1^k,s_0)I(X_{k+1}^n;Z(X_{k+1}^n)|S_{k-N(k)}=s,X_1^k=x_1^k)
\end{align}
and
\begin{align}\label{im44}
&I(X_{k+1}^n;Z(X_{k+1}^n)|S_{k-N(k)},X_1^k,s_0^\prime)\notag\\
&=\sum_{x_1^k,s}p_{X_1^k}(x_1^k)p_{S_{k-N(k)}|X_1^k,S_0}(s|x_1^k,s_0^\prime)I(X_{k+1}^n;Z(X_{k+1}^n)|S_{k-N(k)}=s,X_1^k=x_1^k).
\end{align}
Then it follows that
{\small\begin{align}\label{im46}
&|I(X_1^n;Z(X_1^n)|s_0)-I(X_1^n;Z(X_1^n)|s_0^\prime)|\notag\\
&\le |I(X_{k+1}^n;Z(X_{k+1}^n)|S_{k-N(k)},X_1^k,s_0)-I(X_{k+1}^n;Z(X_{k+1}^n)|S_{k-N(k)},X_1^k,s_0^\prime)|+2\log |\mathcal{S}|\notag\\
&\le\sum_{x_1^k,s}p_{X_1^k}(x_1^k)|p_{S_{k-N(k)}|X_1^k,S_0}(s|x_1^k,s_0)-p_{S_{k-N(k)}|X_1^k,S_0}(s|x_1^k,s_0^\prime )|I(X_{k+1}^n;Z(X_{k+1}^n)|S_{k-N(k)}=s,X_1^k=x_1^k)\notag\\
&\hspace{1cm}+2[\log(k+1) +k\log |\mathcal{X}|]+2\log |\mathcal{S}|\notag\\
&\le (n-k)\log |\mathcal{X}| \delta+2[\log(k+1) +k\log |\mathcal{X}|]+2\log |\mathcal{S}|,
\end{align}}
which further implies that
\begin{align}
|I(X;Z(X)|s_0)-I(X;Z(X)|s_0^\prime)|&=\lim_{n\to\infty}\frac{|I(X_1^n;Z(X_1^n)|s_0)-I(X_1^n;Z(X_1^n)|s_0^\prime)|}{n}\notag\\
&\le \lim_{n\to\infty}\frac{(n-k)\log |\mathcal{X}| \delta+2[\log(k+1) +k\log |\mathcal{X}|]+2\log |\mathcal{S}|}{n}\notag\\
&=\delta\log|\mathcal{X}|.
\end{align}
As $\delta$ is arbitrary, $I(X;Z(X)|s_0)=I(X;Z(X)|s_0^\prime)$, as desired.
\end{proof}

\section{Proof of Theorem~\ref{shannonindependent}}\label{shannonindependentproof}
\begin{proof}
Let $C_{n}(s_0)=\frac{1}{n}\sup_{p_{X_1^n}(\cdot)}I(X_1^n;Z(X_1^n)|s_0)$. Let $\delta>0$  and $k\ge K(\delta)$, where $K(\delta)$ is given as in Lemma~\ref{indecomposable}. First we show that $C_{\mathrm{Shannon}}$ does not depend the choice of the initial state.  As $I(X_1^n;Z(X_1^n)|s_0)$ is continuous with respect to $p_{X_1^n}(\cdot)$ and the set of probability mass functions is a compact subset of $R^{|\mathcal{X}|^n}$, there exists $p^*_{X_1^n}(\cdot)$ (that depends on $s_0$) that achieves the supremum in $C_{n}(s_0)$. We will use $I_{p_{X}}(X;Y)$ to denote the mutual information between $X$ and $Y$ when the distribution of $X$ is $p_X$. Then from~(\ref{im46}) we obtain that
\begin{align}
C_{n}(s_0)-C_n(s_0^\prime)&\le \frac{I_{p^*_{X_1^n}}(X_1^n;Z(X_1^n)|s_0)-I_{p^*_{X_1^n}}(X_1^n;Z(X_1^n)|s_0^{\prime})}{n}\notag\\
&\le \frac{(n-k)\delta\log |\mathcal{X}| +2[\log(k+1) +k\log |\mathcal{X}|]+2\log |\mathcal{S}|}{n}
\end{align}
By swapping the role of $s_0$ and $s_0^\prime$, we have that
\begin{align}
C_{n}(s_0^\prime)-C_n(s_0)
&\le \frac{(n-k)\delta\log |\mathcal{X}| +2[\log(k+1) +k\log |\mathcal{X}|]+2\log |\mathcal{S}|}{n}.
\end{align}
As $C_{\mathrm{Shannon}}(s)=\lim_{n\to\infty}C_{n}(s)$, 
\begin{align}
|C_{\mathrm{Shannon}}(s_0)-C_{\mathrm{Shannon}}(s_0^\prime)|&=\lim_{n\to\infty}|C_{n}(s_0)-C_n(s_0^\prime)|\notag\\
&\le \lim_{n\to\infty}\frac{(n-k)\delta\log |\mathcal{X}| +2[\log(k+1) +k\log |\mathcal{X}|]+2\log |\mathcal{S}|}{n}\notag\\
&=\delta\log |\mathcal{X}|.
\end{align}
Due to the arbitrariness of $\delta$, $C_{\mathrm{Shannon}}(s_0)=C_{\mathrm{Shannon}}(s_0^\prime)$, that is, $C_{\mathrm{Shannon}}$ does not depend on the choice of the initial state.

Now we prove that the limit in the definition of $C_{\mathrm{Shannon}}$ exists. 
Fix $r, t \geq 0$, and let $p^*$ and $q^*$ be input distributions that achieve $C_{r}(s_0)$ and $C_{t}(s_0)$, respectively. From now on, we assume that
\begin{equation} \label{p-q}
(X_1^{r},X_{r+1}^{r+t}) \sim p^*(x_1^r) \times q^*(x_{r+1}^{r+t});
\end{equation}
in other words, $X_1^r$ and $X_{r+1}^{r+t}$ are independent and distributed according to $p^*$ and $q^*$, respectively. Note that
$$
|I(X_{1}^{r+t};Z(X_{1}^{r+t})|s_0)-I(X_{1}^{r+t};Z(X_{1}^{r}),Z(X_{r+1}^{r+t})|s_0)|\le \log(r+1)
$$
and
\begin{align}
I(X_{1}^{r+t};Z(X_{1}^{r}),Z(X_{r+1}^{r+t})|s_0)&\ge I(X_{1}^{r};Z(X_{1}^{r})|s_0)+I(X_{r+1}^{r+t};Z(X_{r+1}^{r+t})|s_0,X_{1}^{r})\notag\\
&=rC_r(s_0)+I(X_{r+1}^{r+t};Z(X_{r+1}^{r+t})|s_0,X_{1}^{r}).
\end{align}
 Then using~(\ref{p-q}), we have that  
\begin{align}
I(X_{r+1}^{r+t};Z(X_{r+1}^{r+t})|s_0,X_{1}^{r})&=\sum_{x_1^r,s}p_{X_1^r}(x_1^r)p_{S_{r-N(r)}|X_1^r,S_0}(s|x_1^r,s_0)I(X_{r+1}^n;Z(X_{r+1}^{r+t})|S_{r-N(r)}=s,X_1^r=x_1^r)\notag\\
&=\sum_{x_1^r,s}p_{X_1^r}(x_1^r)p_{S_{r-N(r)}|X_1^r,S_0}(s|x_1^r,s_0)I(X_{r+1}^n;Z(X_{r+1}^{r+t})|S_{r-N(r)}=s)\notag\\
&=\sum_{x_1^r,s}p_{X_1^r}(x_1^r)p_{S_{r-N(r)}|X_1^r,S_0}(s|x_1^r,s_0)I(X_{1}^t;Z(X_{1}^t)|S_{0}=s),
\end{align}
which, together with~(\ref{im46}), implies that
\begin{align}
&|I(X_{r+1}^{r+t};Z(X_{r+1}^{r+t})|s_0,X_{1}^{r})-I(X_{1}^{t};Z(X_{1}^{t})|s_0)|\notag\\
&\le \sum_{x_1^r,s}P_{X_1^r}(x_1^r)p_{S_{r-N(r)}|X_1^r,S_0}(s|x_1^r,s_0)(t-k)\log |\mathcal{X}| \delta+2[\log(k+1) +k\log |\mathcal{X}|]+2\log |\mathcal{S}|\notag\\
&=(t-k)\delta\log |\mathcal{X}| +2[\log(k+1) +k\log |\mathcal{X}|]+2\log |\mathcal{S}|.
\end{align}

Therefore,
\begin{align}
(r+t)C_{r+t}(s_0)&\ge I(X_{1}^{r+t};Z(X_1^{r+t})|s_0)\nonumber\\
&\ge I(X_{1}^{r+t};Z(X_1^{r}),Z(X_{r+1}^{r+t}))-\log(r+1)\nonumber\\
&\ge rC_r(s_0)+I(X_{r+1}^{r+t};Z(X_{r+1}^{r+t})|s_0,X_{1}^{r})-\log(r+1)\notag\\
&\ge rC_r(s_0)+I(X_{1}^{t};Z(X_{1}^{t})|s_0)\nonumber\\
&{}\hspace{4.5mm}-\log(r+1)-(t-k)\log |\mathcal{X}| \delta+2[\log(k+1) +k\log |\mathcal{X}|]+2\log |\mathcal{S}|\notag\\
&\ge rC_r(s_0)+tC_t(s_0)\nonumber\\
&{}\hspace{4.5mm}-\log(r+1)-\delta(t-k)\log |\mathcal{X}| +2[\log(k+1) +k\log |\mathcal{X}|]+2\log |\mathcal{S}|\notag\\
\end{align}
Therefore,
\begin{align}\label{super additive}
C_{r+t}(s_0)&\ge \frac{r}{r+t}C_{\mathrm{S}}(s_0)+\frac{t}{r+t}C_{t}(s_0)\nonumber\\
&\hspace{4.5mm}-\frac{\log(r+1)+(t-k)\delta\log |\mathcal{X}| +2[\log(k+1) +k\log |\mathcal{X}|]+2\log |\mathcal{S}|}{r+t}.
%&\ge \frac{s+1}{s+1+t+1}C_{s+1}+\frac{t+1}{s+1+t+1}C_{t+1}-\eps
\end{align}
Let $t>k$ be such that
$$
\frac{\log(r+1)+(t-k)\delta\log |\mathcal{X}| +2[\log(k+1) +k\log |\mathcal{X}|]+2\log |\mathcal{S}|}{r+t}\le \delta.
$$
Then for $k$ and $t$ chosen above, we obtain that
$$
(r+t)\left\{C_{r+t}(s_0)-\frac{\delta}{r+t}\right\} \ge r \left\{C_{r}(s_0)-\frac{\delta}{s}\right\}+t\left\{C_{t}(s_0)-\frac{\delta}{t}\right\}.
$$
By~\cite[Lemma 2, pp.~112]{gallagerbook}, $\lim\limits_{n\to \infty}\left\{C_{n}-\delta/n\right\}$ exists and furthermore
$$
C_{\mathrm{Shannon}}=\lim_{n\to \infty}C_{n}(s_0)=\lim_{n\to \infty}\left\{C_{n}(s_0)-\frac{\delta}{n}\right\}=\sup_{n}\left\{C_{n}(s_0)-\frac{\delta}{n}\right\}=\sup_{n} C_n(s_0).
$$
The proof of the theorem is then complete.

\end{proof}
\section{Proof of Theorem~\ref{smbfsc}}\label{smbfscproof}
From~\cite[Theorem 8]{markovams} it follows that both $(X,Y)$ and $Y$ are asymptotically mean stationary. From~\cite[Corollary 3 and Lemma 3]{grayergodic} it follows that both $(X,Y)$ and $Y$ are ergodic. Then it follows from~\cite[Theorem 3]{barron} that
{\small\begin{align}
-\frac{\log p_{Y_1^n}(Y_1^n)}{n}\to H(Y)\quad \mbox{a.s. and in $L^1$}
\end{align}}
and 
{\small\begin{align}
-\frac{\log W_2(Y_1^n|X_1^n)}{n}\to H(Y|X)\quad \mbox{a.s. and in $L^1$},
\end{align}}
as desired.

\section{Proof that $\hat{W}$ is indecomposable}\label{indecomposablenew}
Let $\eps>0$ and let $(s_0^*,s_0^{*\prime})$ be a pair of initial states. For $i\ge 1$, let $\X_i$ and $\Z_i$ be respectively $\mathcal{X}^n$-valued and $\mathcal{Z}^n$-valued random vectors. Let $\X=\{\X_i\}$ be the input of the channel $\hat{W}$ and $\Z=\{\Z_i\}$ be its corresponding output. Let 
{\small$$\Pr(S_k^{*}=s_k^*|\X_i^k=\zeta_1^k,S_0^{*}=s_0^*)=\sum_{s_1^{k-1*},z_1^{*k}}\hat{W}^{k}(z_1^{*k},s_1^{*k}|\zeta_1^{k},s_0^*).$$To prove that $\hat{W}$ is indecomposable, it suffices to show that there is an integer $K_3$ such that if $k\ge K_3$, then
{\small\begin{align}\label{indecom200}
|\Pr(S_k^{*}=s_k^*|\X_i^k=\zeta_1^k,S_0^{*}=s_0^*)-\Pr(S_k^{*}=s_k^*|\X_1^k=\zeta_1^k,S_0^{*}=s_0^*)|\le \eps,
\end{align}}
for any $\zeta_1^k$ and $s_k^*\in\mathcal{S}^{*n}-\{\emptyset\}$. 

Let $N_k=|\{i:Z_i\not=\emptyset\}|$. Then $N_k$ is a binomial random variable with $k$ trials and success probability $1-d^n$. Note that
\begin{align}\label{def1}
&\Pr(S_k^{*}=s_k^*|\X_1^k=\zeta_1^k,S_0^{*}=s_0^*)=\sum_{m=0}^{k}\Pr(S_k^{*}=s_k^*,N_k=m|\X_1^k=\zeta_1^k,S_0^{*}=s_0^*)\notag\\
&\hspace{3.2cm}=\sum_{m=0}^{k}\sum_{i_1^{m}}\sum_{x_{i_1^{m}}^{*}:\ell(x_{i_m}^{*})=\ell(s_k^*)}\left(\prod_{j=1}^{m}W_1^{n}(x_{i_j}^{*}|\zeta_{i_m})\right)W_2^{\ell(x_{i_1^{m}}^{*})}(s_k^*|x_{i_1^{m}}^{*},s_{0,\ell(s_0^*)}),
\end{align}
where $i_1^m$ is an $m$-dimensional integer vector with $1\le i_j\le k$ and $x_{i_1^{m}}^{*}=(x_{i_1}^{*},\cdots,x_{i_{m}}^{*})$ with $x_{i_{j}}^{*}\in\mathcal{X}^{*n}$ for $1\le j\le m$.
As $W_2$ is an indecomposable FSC, there exists an integer $K_4$ such that if $n\ge K_4$,
{\small$$
\sup_{s}|\Pr(S_n=s|X_1^n=x_1^n,S_0=s_0)-\Pr(S_n=s|X_1^n=x_1^n,S_0=s_0^\prime)|\le \eps,
$$}
for any pair of initial states $(s_0,s_0^\prime)$ and $x_1^n$. Thus it follows that if $\ell(x_{i_1^{m-1}}^{*})\ge K_4$, then
{\small\begin{align}\label{indecom100}
&|W_2^{\ell(x_{i_1^{m}}^{*})}(s_k^*|x_{i_1^{m}}^{*},s_{\ell(s_0^*)}^{*})-W_2^{\ell(x_{i_1^{m}}^{*})}(s_k^*|x_{i_1^{m}}^{*},s_{\ell(s_0^{*\prime})}^{*\prime})|\notag\\
&=p(s_{k,2}^{k,\ell(s_k^{*})}|x_{i_{m},2}^{i_{m},\ell(x_{i_m}^{*})},s_{k,1})\notag\\
&\hspace{0.5cm}\times\left|W_2^{\ell\left(x_{i_1^{m-1}}^{*}x_{i_{m},1}\right)}(s_{k,1}|x_{i_1^{m}}^{*},s_{0,\ell(s_0^{*})})-W_2^{\ell\left(x_{i_1^{m-1}}^{*}x_{i_{m},1}\right)}(s_{k,1}|x_{i_1^{m-1}}^{*}x_{i_{m},1},s_{0,\ell(s_0^{*\prime})}^{\prime})\right|\notag\\
&\le \frac{\eps}{2},
\end{align}}
where $x_{i_1^{m-1}}^{*}x_{i_{m}}$ is the concatenation of its symbols,  $x_{i_{m},2}^{k,\ell(x_{i_{m}}^{*})}=(x_{i_{m},2},\cdots,x_{i_{m},\ell(x_{i_{m}}^{*})})$ and $s_{k,2}^{k,\ell(s_k^{*})}=(s_{k,2},\cdots,s_{k,\ell(s_k^{*})})$, $s_{k,j}$ and $x_{i_{m},j}$ are respectively the $j^{\mathrm{th}}$-component of $s_k^*$.
 
Note that 
{\small\begin{align}\label{im700}
&\sum_{m=0}^{K_4}\sum_{i_1^{m}}\sum_{x_{i_1^{m}}^{*}:\ell(x_{i_m}^{*})=\ell(s_k^*)}  \left(\prod_{j=1}^{m}W_1^{n}(x_{i_j}^{*}|\zeta_{i_m})\right)W_2^{\ell(x_{i_1^{m}}^{*})}(s_k^*|x_{i_1^{m}}^{*},s_{\ell(s_0^*)}^{*})\notag\\
&=\Pr(S_k^{*}=s_k^*, N_k\le K_4|\X_1^k=\zeta_1^k,S_0^{*}=s_0^*)\notag\\
&\le \Pr( N_k\le K_4|\X_1^k=\zeta_1^k,S_0^{*}=s_0^*)\stackrel{(a)}{\le} \Pr(N_k\le K_4),
\end{align}}where~$(a)$ follows from the fact that $N_k$ is independent of $(\X_1^k,S_0^{*})$. As $\ell(x_{i_1^{m-1}}^{*})\ge m-1$, then it follows from~(\ref{def1}) and~(\ref{indecom100}) that
{\small\begin{align}\label{indecom100}
&|\Pr(S_k^{*}=s_k^*|\X_1^k=\zeta_1^k,S_0^{*}=s_0^*)-\Pr(S_k^{*}=s_k^*|\X_1^k=\zeta_1^k,S_0^{*}=s_0^{*\prime})|\notag\\
&=\sum_{m=0}^{k}\sum_{i_1^{m}}\sum_{x_{i_1^{m}}^{*}:\ell(x_{i_m}^{*})=\ell(s_k^*)}\left(\prod_{j=1}^{m}W_1^{n}(x_{i_j}^{*}|\zeta_{i_m})\right)|W_2^{\ell(x_{i_1^{m}}^{*})}(s_k^*|x_{i_1^{m}}^{*},s_{\ell(s_0^*)}^{*})-W_2^{\ell(x_{i_1^{m}}^{*})}(s_k^*|x_{i_1^{m}}^{*},s_{\ell(s_0^{*\prime})}^{*\prime})|\notag\\
&=\left\{\sum_{m=0}^{K_4}+\sum_{m=K_4+1}^{k}\right\}\sum_{i_1^{m}}\sum_{x_{i_1^{m}}^{*}:\ell(x_{i_m}^{*})=\ell(s_k^*)} \bigg\{ \left(\prod_{j=1}^{m}W_1^{n}(x_{i_j}^{*}|\zeta_{i_m})\right)\notag\\
&\hspace{7cm}\times|W_2^{\ell(x_{i_1^{m}}^{*})}(s_k^*|x_{i_1^{m}}^{*},s_{\ell(s_0^*)}^{*})-W_2^{\ell(x_{i_1^{m}}^{*})}(s_k^*|x_{i_1^{m}}^{*},s_{\ell(s_0^{*\prime})}^{*\prime})|\bigg\}\notag\\
&\le \sum_{m=0}^{K_4}\sum_{i_1^{m}}\sum_{x_{i_1^{m}}^{*}:\ell(x_{i_m}^{*})=\ell(s_k^*)}  \left(\prod_{j=1}^{m}W_1^{n}(x_{i_j}^{*}|\zeta_{i_m})\right)|W_2^{\ell(x_{i_1^{m}}^{*})}(s_k^*|x_{i_1^{m}}^{*},s_{\ell(s_0^*)}^{*})-W_2^{\ell(x_{i_1^{m}}^{*})}(s_k^*|x_{i_1^{m}}^{*},s_{\ell(s_0^{*\prime})}^{*\prime})|+\frac{\eps}{2}\notag\\
&\stackrel{(a)}{\le} 2\Pr( N_k\le K_4)+\frac{\eps}{2},
\end{align}}
where~$(a)$ follows from~(\ref{im700}).
As $N_k$ is a binomial random variable, there exists an integer $K_3$ such that $2\Pr( N_k\le K_4)\le \eps/2$ for $k\ge K_3$. Thus for $k\ge K_3$, we have~(\ref{indecom200}), as desired.

\end{document}